\documentclass{revtex4}
\usepackage{amsmath,amsthm,amssymb}
\usepackage{rotating}
\usepackage{hhline}
\usepackage{graphicx}

\usepackage{amsfonts}
\newtheorem{theorem}{Theorem}[section]
\newtheorem{lemma}[theorem]{Lemma}
\newtheorem{proposition}[theorem]{Proposition}

\theoremstyle{remark}

\theoremstyle{definition}
\newtheorem{definition}[theorem]{Definition}

\theoremstyle{example}
\newtheorem{example}[theorem]{Example}

\theoremstyle{notation}

\newcommand{\bra}[1]{\langle#1|}
\newcommand{\ket}[1]{|#1\rangle}
\begin{document}

\title{ Random walks in finite Abelian groups with Birkhoff subpolytopes of doubly stochastic matrices and their physical implementation}            
\author{A. Vourdas}
\affiliation{Department of Computer Science,\\
University of Bradford, \\
Bradford BD7 1DP, United Kingdom\\a.vourdas@bradford.ac.uk}

\begin{abstract}
Random walks in a finite Abelian group $G$ are studied.
They use Markov chains with doubly stochastic transition matrices, in a Birkhoff subpolytope ${\cal B}(G)$ associated with the group $G$.
It is shown that all future probability vectors belong to a polytope which does not depend on the transition matrices, and which shrinks during time evolution.
Various quantities are used to describe the probability vectors: the majorization preorder, Lorenz values and the Gini index, entropic quantities, and the total variation distance.
The general results are applied to the additive group ${\mathbb Z}(d)$, and to the  Heisenberg-Weyl group $HW(d)/{\mathbb Z}(d)$.
A physical implementation of random walks in ${\mathbb Z}(d)$ that involves a sequence of non-selective projective measurements, is discussed.
A physical implementation of random walks in the Heisenberg-Weyl group $HW(d)/{\mathbb Z}(d)$ using a sequence of non-selective POVM 
measurements with coherent states, is also presented.

\end{abstract}
\maketitle

\section{Introduction}

Random walks is a very wide subject and has been studied in various contexts.
They have applications in physics (e.g., Brownian motion), Computer Science (e.g., search algorithms), finance (e.g., stock market), meteorology, biology, etc.
Related is the area of Markov chains which have   been studied extensively in various contexts\cite{M1,M2,M3}, and also the area of Markov chains Monte Carlo\cite{M4}.
A different approach (not related to the present paper) are the quantum random walks\cite{QR1,QR2} where quantum interference plays an important role. 

In this paper we study random walks in a finite Abelian group $G$ (e.g., \cite{G1,G2,G3,G4,G5}). Their study involves finite-state Markov chains with doubly stochastic transition matrices.
Most of the work on Markov chains uses row stochastic transition matrices which are more general, and
there is not much work on Markov chains with doubly stochastic transition matrices.
The merit of using doubly stochastic transition matrices is that we get stronger results (propositions \ref{pro3},\ref{pro5},\ref{pro6}). In particular:
\begin{itemize}
\item
We show that the doubly stochastic transition matrices belong to a particular subpolytope ${\cal B}(G)$ of the full Birkohff polytope ${\cal B}$, which has as vertices permutation matrices that form
a representation of the group $G$ (proposition \ref{pro3}).
\item
We study the time evolution of probability vectors describing the random walk, and prove several relations (proposition \ref{pro5}).
We also introduce polytopes that contain all future probability vectors, which do not depend on the transition matrices, and which shrink during time evolution.
Similar polytopes have been introduced in \cite{VOU1} in a different context,  but here they are polytopes of probability vectors which are related to the Birkhoff subpolytope ${\cal B}(G)$, which in turn is related to the group $G$.

\item
We consider the case of ergodic doubly stochastic transition matrices (proposition \ref{pro6}).
\end{itemize}
As examples we apply the general formalism into random walks in the additive group ${\mathbb Z}(d)$ (of integers modulo $d$), 
and also in the Heisenberg-Weyl group $HW(d)/{\mathbb Z}(d)$. We specify explicitly the polytopes of the probability vectors in these examples.

We use various quantities to describe the random walks and the associated Markov chains, as follows: 
\begin{itemize}
\item
The majorization preorder among probability vectors\cite{MAJ1,MAJ2,MAJ3}.
Related to it are the Lorenz values and the Gini index\cite{{Gini,Gini1}}, which are used to quantify the sparsity of probability vectors. 
They are statistical quantities which are very popular in Mathematical Economics, but they can be used in any context.

\item
Entropic quantities which quantify the uncertainty (opposite to sparsity) of probability vectors.

\item
The total variation distance between probability vectors \cite{G1,G2,G3,G4,G5}.

\end{itemize}

We study a physical implementation of random walks in ${\mathbb Z}(d)$ using orthogonal projectors in a sequence of non-selective quantum measurements.
 Non-selective measurements have been used recently in many applications in quantum technologies\cite{N,N1,N2,N3,N4,N5,N6,N7}.

The Heisenberg-Weyl group $HW(d)/{\mathbb Z}(d)$ is intimately related to coherent states in a quantum system with $d$-dimensional Hilbert space $H(d)$(e.g, \cite{VOU}).
We study a physical implementation of random walks on the finite Heisenberg-Weyl group $HW(d)/{\mathbb Z}(d)$, using a sequence of non-selective POVM 
(positive operator valued measure) measurements with coherent states in $H(d)$.

In section 2 we discuss two types of non-selective measurements. The first uses orthogonal projectors and 
the second uses non-selective POVM measurements with  coherent states.
In section 3 we present doubly stochastic matrices which form the Birkhoff polytope ${\cal B}$.
We then introduce Birkoff subpolytopes ${\cal B}(G)$ related to finite groups $G$.
We also introduce polytopes of probability vectors. All these polytopes play a central role in the present paper.
In section 4 we introduce various quantities that describe probability vectors.
They include, the majorization preorder among probability vectors, Lorenz values and the Gini index, and entropic quantities.

In section 5 we study random walks on a finite Abelian group $G$ using Markov chains with doubly stochastic matrices in the  Birkhoff subpolytope ${\cal B}(G)$.
Our general results are presented in propositions \ref{pro3},\ref{pro5},\ref{pro6}.
We then apply these ideas to the additive group ${\mathbb Z}(d)$, and to the Heisenberg-Weyl group $HW(d)/{\mathbb Z}(d)\simeq {\mathbb Z}(d)\times {\mathbb Z}(d)$.
We note that the applied probability literature on random walks in finite Abelian groups \cite{G1,G2,G3,G4,G5}, uses Fourier transforms to reduce convolutions (Eq.(\ref{325}) with the assumption in Eq.(\ref{36A})) into multiplications.
Here we follow another complementary approach based on Markov chains with doubly stochastic matrices and Birkhoff polytopes.

In section 6 we present a physical implementation of random walks in ${\mathbb Z}(d)$ with a sequence of non-selective projective quantum measurements.
In section 7 we present a physical implementation of random walks on the Heisenberg-Weyl group $HW(d)/{\mathbb Z}(d)$ using a sequence of non-selective POVM 
measurements with coherent states.
We conclude in section 8 with a discussion of our results.

\section{Non-selective measurements on a system with finite dimensional Hilbert space}

We  consider a quantum system $\Sigma_d$ with variables in  the ring ${\mathbb Z}(d)$ of integers modulo $d$. 
$H(d)$ is the $d$-dimensional Hilbert space describing this system. 
There are well known technical differences between quantum systems with odd dimension $d$ and even dimension $d$.
In this paper we consider systems with odd dimension $d$.

Let $|X;j\rangle$ where $j\in {\mathbb Z}(d)$ be an orthonormal basis in $H(d)$. $X$ in the notation is not a variable, it simply indicates `position states'.
The finite Fourier transform $F$ is given by
\begin{eqnarray}\label{FF}
&&F=\frac{1}{\sqrt{d}}\sum _{j,k}\omega(jk) \ket{X;j}\bra{X;k};\;\;\;\omega(\alpha)=\exp \left (i\frac{2\pi\alpha}{d}\right);\;\;\;\alpha,j,k\in{\mathbb Z}(d)\nonumber\\
&&F^4={\bf 1};\;\;\;FF^{\dagger}={\bf 1}.
\end{eqnarray}
We act with $F$ on position states and get the basis of `momentum states':
\begin{eqnarray}
&&\ket{P;j}=F\ket{X;j}.
\end{eqnarray}
$P$ in the notation is not a variable, it simply indicates `momentum states'.

The phase space of this system is  ${\mathbb Z}(d)\times {\mathbb Z}(d)$ and in it we introduce the displacement operators
\begin{eqnarray}\label{A}
&&X^\beta = \sum _j\omega (-j\beta)|P; j \rangle \bra{P; j }=\sum _j\ket{X; j+\beta}\bra{X;j};\nonumber\\
&&Z^\alpha =\sum _j|P; j+\alpha \rangle \bra{P;j}= \sum _j\omega (\alpha j)|X; j\rangle\bra{X;j}=F X^\alpha F^\dagger;\nonumber\\
&&X^{d}=Z^{d}={\bf 1};\;\;\;X^\beta Z^\alpha= Z^\alpha X^\beta \omega(-\alpha \beta);\;\;\;\alpha, \beta\in {\mathbb Z}(d).
\end{eqnarray}
General displacement operators are the unitary operators
\begin{eqnarray}
D(\alpha, \beta)=Z^\alpha X^\beta \omega (-2^{-1}\alpha \beta);\;\;\;D(\alpha, \beta)\ket{X;j}=\omega(2^{-1}\alpha\beta+\alpha j)\ket{X;j+\beta}.
\end{eqnarray}
The $2^{-1}=\frac{d+1}{2}$ is an integer in ${\mathbb Z}(d)$ with odd $d$, considered here.
The $D(\alpha, \beta)\omega (\gamma)$ with the multiplication
\begin{eqnarray}
D(\alpha_1, \beta_1)\omega(\gamma_1)D(\alpha_2, \beta_2)\omega(\gamma_2)=D(\alpha_1+\alpha_2, \beta_1+\beta_2)\omega[\gamma_1+\gamma_2+2^{-1}(\alpha_1\beta_2-\alpha_2\beta_1)],
\end{eqnarray}
form a representation of the 
Heisenberg-Weyl group $HW(d)$, which has $d^3$ elements. 

Below we use the $HW(d)/{\mathbb Z}(d)$ which has  $d^2$ elements  $D(\alpha, \beta)$ (defined up to a phase factor $\omega (\gamma)$). 
$HW(d)/{\mathbb Z}(d)$ is isomorphic to the additive group ${\mathbb Z}(d)\times {\mathbb Z}(d)$ with the componentwise addition
\begin{eqnarray}
(\alpha_1, \beta_1)+(\alpha_2, \beta_2)=(\alpha_1+\alpha_2, \beta_1+\beta_2).
\end{eqnarray}
We denote this as $HW(d)/{\mathbb Z}(d)\simeq {\mathbb Z}(d)\times {\mathbb Z}(d)$.

\subsection{Non-selective measurements with orthogonal projectors}
We consider the orthogonal projectors
\begin{eqnarray}
\varpi(j)=\ket{X;j}\bra{X;j};\;\;\;\sum_j\varpi(j)={\bf 1};\;\;\;j\in{\mathbb Z}(d).
\end{eqnarray}
We use them to perform measurements on a system with density matrix $\rho$, and we get the post-measurement density matrix $\varpi(j)$ with probability $p_{j}$:
\begin{equation}
\rho\rightarrow \varpi(j);\;\;\;p_j={\rm Tr}[\rho \varpi(j)];\;\;\;\sum _{j}p_{j}=1.
\end{equation}
If we perform non-selective measurements (in which case we do not look at the outcome \cite{N,N1,N2,N3,N4,N5,N6,N7}), the post-measurement density matrix $\sigma$ is given by
\begin{equation}\label{ns1}
\rho \rightarrow \sigma=\sum _{j}\varpi(j)\rho \varpi(j)=\sum _{j} p_{j}\varpi(j).
\end{equation}
The non-selective measurement destroys the off-diagonal elements $\ket{X;j}\bra{X;k}$ with $j\ne k$.

\subsection{Non-selective POVM measurements with coherent states}
Intimately related to $HW(d)/{\mathbb Z}(d)$ are the coherent states, which are defined as (e.g., \cite{VOU})
\begin{eqnarray}
&&\ket{C;\alpha, \beta}=D(\alpha, \beta)\ket{\eta};\;\;\;\alpha, \beta \in {\mathbb Z}_d\nonumber\\
&&\ket{\eta}=\sum _{r=0}^{d-1}\eta_r\ket{X;r};\;\;\;\sum _{r=0}^{d-1}|\eta_r|^2=1.
\end{eqnarray}
$C$ in the notation is not a variable, it simply indicates `coherent states'.
$\ket{\eta}$ is a `generic' vector (called fiducial vector), such that any $d$ coherent states are linearly independent.
Position and momentum states are examples of vectors which should not be used as fiducial vectors (in that case there are subsets of $d$ coherent states which differ only by a trivial phase factor).
We can show that
\begin{eqnarray}
\langle X;r \ket{C;\alpha, \beta}=\omega(-2^{-1}\alpha \beta +\alpha r)\eta_{r-\beta}.
\end{eqnarray}

The coherent states have the following properties:
\begin{itemize}
\item
{\bf Resolution of the identity:}
\begin{equation}
\frac{1}{d}\sum_{\alpha, \beta } \Pi(\alpha,\beta)={\bf 1};\;\;\;\Pi(\alpha,\beta)=\ket{C;\alpha, \beta}\bra{C;\alpha, \beta};\;\;\;\alpha, \beta\in {\mathbb Z}(d).
\end{equation}
$\Pi(\alpha,\beta)$ are $d^2$ projectors of rank one.
\item
{\bf Closure under displacement transformations:}
\begin{equation}
D(\gamma, \delta)\ket{C;\alpha, \beta}=\omega [2^{-1}(\beta\gamma-\alpha\delta)]\ket{C;\alpha+\gamma, \beta+\delta}.
\end{equation}
\item
{\bf Non-orthogonality:}
\begin{eqnarray}
&&\langle C;\gamma, \delta\ket{C;\alpha, \beta}=\omega[2^{-1}(\alpha\beta+\gamma\delta)-\beta\gamma]\sum _s \omega [(\alpha-\gamma)s]\eta_{s+\beta-\delta}^*\eta_s\nonumber\\
&&{\rm Tr}[\Pi(\alpha,\beta)\Pi(\gamma,\delta)]=\left |\sum _s \omega [(\alpha-\gamma)s]\eta_{s+\beta-\delta}^*\eta_s\right|^2.
\end{eqnarray}

\end{itemize}

Since 
\begin{equation}
\sum _{\alpha,  \beta}E(\alpha,\beta)={\bf 1};\;\;\;E(\alpha,\beta)=\frac{1}{d}\Pi(\alpha,\beta),
\end{equation}
the $d^2$ operators $E(\alpha,\beta)$ form a  positive operator-valued measure (POVM).
The Kraus operators $K(\alpha,\beta)$ are in this case Hermitian operators: 
\begin{equation}
K(\alpha,\beta)=\frac{1}{\sqrt{d}}\Pi(\alpha,\beta);\;\;\;E(\alpha,\beta)=[K(\alpha,\beta)]^2.
\end{equation}

We use them to perform measurements on a system with density matrix $\rho$, and we get the density matrix $\Pi(\alpha,\beta)$ with probability $p_{\alpha\beta}$:
\begin{equation}
\rho\rightarrow \Pi(\alpha,\beta);\;\;\;p_{\alpha\beta}=\frac{1}{d}{\rm Tr}[\rho \Pi(\alpha,\beta)];\;\;\;\sum _{\alpha,  \beta}p_{\alpha\beta}=1.
\end{equation}
If we perform {non-selective measurements}, the post-measurement density matrix $\sigma$ is given by
\begin{equation}\label{ns2}
\rho \rightarrow \sigma=\sum _{\alpha,  \beta} K(\alpha,\beta)\rho K(\alpha,\beta)
=\frac{1}{d}\sum _{\alpha,  \beta}\Pi(\alpha,\beta)\rho \Pi(\alpha,\beta)=\sum _{\alpha,  \beta} p_{\alpha \beta}\Pi(\alpha,\beta)
\end{equation}
The $(\alpha, \beta)\leftrightarrow d\alpha+\beta$ is a bijective map between the sets $\{0,...,d-1\}\times \{0,...,d-1\}$ and $\{0,...,d^2-1\}$.
We use it to introduce a `single index notation':
\begin{eqnarray}\label{155}
\nu=d\alpha+\beta;\;\;\;\nu=0,...,d^2-1;\;\;\;\alpha,\beta=0,...,d-1
\end{eqnarray}
Then
\begin{eqnarray}\label{155A}
\alpha=\left \lfloor{ \frac{\nu}{d}}\right \rfloor;\;\;\;\beta={\rm rem}\left ( \frac{\nu}{d}\right )
\end{eqnarray}
Here `rem' indicates the remainder.
The $D(\alpha, \beta)$, $\ket{C;\alpha, \beta}$, $\Pi(\alpha, \beta)$ will be denoted as $D(\nu)$, $\ket{C;\nu}$, $\Pi(\nu)$, correspondingly.

The above relations can now be written in a simpler way as
\begin{eqnarray}\label{14}
&&\rho \rightarrow \sigma=\sum _{\nu=0} ^{d^2-1}p_\nu\Pi(\nu);\;\;\;\Pi(\nu)=\ket{C;\nu}\bra{C;\nu}\nonumber\\
&&p_{\nu}=\frac{1}{d}{\rm Tr}[\rho \Pi(\nu)];\;\;\;;\;\;\;\sum_{\nu=0}^{d^2-1}p_\nu=1.
\end{eqnarray}
We can express $\rho$ as
\begin{eqnarray}\label{14}
&&\rho =\sigma + A;\;\;\;A=\frac{1}{d^2}\sum_{\nu\ne \mu}\ket{C;\nu}\bra{C;\mu}\bra{C;\nu}\rho\ket{C;\mu}-\left(1-\frac{1}{d}\right )\sum_\nu p_\nu\Pi(\nu);\;\;\;{\rm Tr}(A)=0.
\end{eqnarray}
The non-selective measurements destroy the off-diagonal $A$ (its trace is zero), and we are left with the diagonal part which is $\sigma$.

\subsection{A single index notation for $HW(d)/{\mathbb Z}(d)\simeq {\mathbb Z}(d)\times {\mathbb Z}(d)$}\label{sec15}

Although there is a bijective map between the {\bf sets} $\{0,...,d-1\}\times \{0,...,d-1\}$ and $\{0,...,d^2-1\}$ (given in Eqs(\ref{155}), (\ref{155A})), 
the additive group ${\mathbb Z}(d)\times {\mathbb Z}(d)$ with componentwise addition, is  {\bf not} isomorphic to the additive group ${\mathbb Z}(d^2)$ with the `standard' addition rule.
The sum in  ${\mathbb Z}(d)\times {\mathbb Z}(d)$, is {\bf not} mapped to the `standard sum' in ${\mathbb Z}(d^2)$ (a similar point has been made in ref.\cite{LV} in a different context).
For example, in ${\mathbb Z}(3)\times {\mathbb Z}(3)$
\begin{eqnarray}
(1,2)+(0,2)=(1,1).
\end{eqnarray}
But according to the map in Eq.(\ref{155}) with $d=3$,
\begin{eqnarray}
(1,2)\rightarrow 5\in {\mathbb Z}(9);\;\;\;(0,2)\rightarrow 2\in {\mathbb Z}(9);\;\;\;(1,1)\rightarrow 4\in {\mathbb Z}(9).
\end{eqnarray}
In ${\mathbb Z}(9)$ the `standard addition' gives $5+2=7$, not $4$. 
We elucidate this by writing $5$ and $2$ in the base-$3$ numerical system as 
\begin{eqnarray}
5=1\times 3^1+2\times 3^0;\;\;\;2=0\times 3^1+2\times3^0.
\end{eqnarray}
Addition in the $3^0$-column gives a `carry' which takes an extra $1$ in the $3^1$-column. There is no equivalent to `carry' in addition in ${\mathbb Z}(3)\times {\mathbb Z}(3)$.

In the single index notation, addition in ${\mathbb Z}(d)\times {\mathbb Z}(d)$ (we use the symbol $\boxplus$) or  multiplication in $HW(d)/{\mathbb Z}(d)$ gives
\begin{eqnarray}
\nu\boxplus \mu=\kappa;\;\;\;\nu, \mu, \kappa=0,...,d^2-1.
\end{eqnarray}
$\kappa$ can be calculated indirectly in two steps:
\begin{itemize}
\item
Use Eq.(\ref{155A}) to get $\nu\rightarrow (\alpha, \beta)$ and  $\mu\rightarrow (\gamma, \delta)$ where $\alpha, \beta,\gamma, \delta$ take values in
$\{0,...,d-1\}$. 
\item
Add the two pairs in ${\mathbb Z}(d)\times {\mathbb Z}(d)$, to get $(\alpha+\gamma,\beta+\delta)$ with 
$\alpha+\gamma$ and $\beta+\delta$ taking values in $\{0,...,d-1\}$, and then use Eq.(\ref{155}):
\begin{eqnarray}
(\alpha+\gamma,\beta+\delta) \rightarrow \kappa=d(\alpha+\gamma)+(\beta+\delta). 
\end{eqnarray}
\end{itemize}
With this `indirect' rule for addition the $\{0,...,d^2-1\}$ becomes an additive group,  isomorphic to the additive group ${\mathbb Z}(d)\times {\mathbb Z}(d)$ (and to the multiplicative group $HW(d)/{\mathbb Z}(d)$).

As an example we give in table \ref{t1} the sum between elements in $\{0,...,8\}$ so that it is isomorphic to the additive group ${\mathbb Z}(3)\times {\mathbb Z}(3)$.
Using this table we find $D(5)D(6)=D(5\boxplus 6)=D(2)$ in $HW(3)/{\mathbb Z}(3)$, which in the double index  notation is $D(1,2)D(2,0)=D(0,2)$ (here the displacement operators are cosets, i.e., they are defined up to a phase factor).
Also from this table we see that $7\boxplus 5=0$, and therefore the `additive inverse' of $7$ (which can also be denoted as $-7$) is $5$.
Indeed in the double index notation $7\rightarrow (2,1)$ and $5\rightarrow (1,2)$ and then $(2,1)+(1,2)=(0,0)$.
This table is used in section \ref{sec17} for multiplications of elements of $HW(3)/{\mathbb Z}(3)$ in the single index notation.

The merit of the single index notation is that later we will deal with matrices rather than tensors with four indices.

\begin{table} 
\caption{The $\{0,...,8\}$ with the addition in this table (denoted as $\boxplus$) is isomorphic to the additive group ${\mathbb Z}(3)\times {\mathbb Z}(3)$, and to the multiplicative group $HW(3)/{\mathbb Z}(3)$.}
\begin{tabular}{|c||c|c|c|c|c|c|c|c|c|}\hline
&0&1&2&3&4&5&6&7&8\\ \hhline{|=|=|=|=|=|=|=|=|=|=|}
0&0&1&2&3&4&5&6&7&8\\\hline
1&1&2&0&4&5&3&7&8&6\\\hline
2&2&0&1&5&3&4&8&6&7\\\hline
3&3&4&5&6&7&8&0&1&2\\\hline
4&4&5&3&7&8&6&1&2&0\\\hline
5&5&3&4&8&6&7&2&0&1\\\hline
6&6&7&8&0&1&2&3&4&5\\\hline
7&7&8&6&1&2&0&4&5&3\\\hline
8&8&6&7&2&0&1&5&3&4\\\hline
\end{tabular} \label{t1}
\end{table}

\section{Polytopes of doubly stochastic matrices and polytopes of probability vectors}

\subsection{Permutations and doubly stochastic matrices}

Let ${\cal I}=\{0,...,n-1\}$. A permutation $\pi$ is a bijective map $\pi(a)$ from ${\cal I}$ to itself.
The set of all permutations is the symmetric group ${\cal S}$\cite{SYM} which has $n!$ elements.
Multiplication in this group is the composition:
\begin{eqnarray}
(\pi_2 \circ \pi_1)(a)=\pi_2[\pi_1(a)];\;\;\;a=0,...,n-1.
\end{eqnarray}
The unity is ${\bf 1}(a)=a$. Each permutation $\pi$ has an inverse $\pi^{-1}$.

A permutation matrix $M_\pi$ is a $n\times n$ matrix with elements 
\begin{eqnarray}\label{PP}
M_\pi(a,b)=\delta(\pi(a),b);\;\;\;{\rm rank} (M_\pi)=n.
\end{eqnarray}
Here $\delta$ is the Kronecker delta. 
Each row and each column have exactly one element which is $1$ and the other $n-1$ elements are $0$.
It is easily seen that
\begin{eqnarray}
M_{\pi_2} M_{\pi_1}=M_{\pi_1\circ \pi_2};\;\;\;M_{\bf 1}={\bf 1};\;\;\;M_{\pi^{-1}}=[M_\pi]^{-1}=[M_\pi]^\dagger.
\end{eqnarray}
The $n!$ matrices $M_\pi$ with multiplication, form a representation of the symmetric group ${\cal S}$.

Let ${\cal P}$ be a $n\times n$ matrix with non-negative elements. It is a row stochastic matrix, if in each row the sum of all elements is equal to one:
\begin{eqnarray}\label{A7}
\sum _{b=0}^{n-1} {\cal P}_{ab}=1;\;\;{\cal P}_{ab}\ge 0.
\end{eqnarray}
It is a doubly stochastic matrix if in a addition to that, in each column the sum of all elements is equal to one:
\begin{eqnarray}\label{A7}
\sum _{a=0}^{n-1} {\cal P}_{ab}=1.
\end{eqnarray}
The product of two doubly stochastic matrices is a  doubly stochastic matrix.
The inverse of a doubly stochastic matrix might not be a doubly stochastic matrix.
The set of all doubly stochastic matrices  with respect to matrix multiplication is a semigroup.
An example of a doubly stochastic matrix is the $\frac{1}{n}J_n$, where $J_n$ is the matrix of ones (all its elements are one).

\subsection{Probability vectors}
Let ${\bf x}$ be a probability vector (written as a row)
\begin{eqnarray}
{\bf x}=(x_0,...,x_{n-1});\;\;\;x_a\ge 0;\;\;\;\sum _{a=0}^{n-1}x_a=1.
\end{eqnarray}
The product ${\bf x}{\cal P}$  is also a probability vector.
We use the convention in the area of Markov processes, which is to multiply row vectors with matrices on the right.

Special cases of probability vectors, are the most uncertain probability vector ${\bf u}$ and the most certain probability vector which is ${\bf c}$ (and its permutations):
\begin{eqnarray}\label{5}
{\bf u}=\left (\frac{1}{n},...,\frac{1}{n}\right );\;\;\;{\bf c}=(1,0,\cdots,0).
\end{eqnarray}
For any doubly stochastic matrix ${\cal P}$ we get
\begin{eqnarray}\label{6}
{\bf u}{\cal P}={\bf u}.
\end{eqnarray}
Also for any probability vector ${\bf x}$ we get
\begin{eqnarray}
{\bf x}\left (\frac{1}{n}J_n\right )={\bf u}.
\end{eqnarray}

\subsection{Birkhoff subpolytopes}

The Birkhoff-von Neumann theorem states that every doubly stochastic matrix  ${\cal P}$ can be expanded in terms of the $n!$ permutation matrices with probabilities 
$\lambda(\pi)$ as coefficients:
\begin{eqnarray}\label{20}
{\cal P}=\sum _{\pi\in {\cal S}}\lambda(\pi)M_\pi;\;\;\;\sum _{\pi \in {\cal S}}\lambda(\pi)=1;\;\;\;\lambda(\pi)\ge 0.
\end{eqnarray}
This expansion is {\bf not} unique. 

The set of all $n\times n$ doubly stochastic matrices is a convex polytope called Birkhoff polytope ${\cal B}$. It has the $n!$ permutation matrices $M_\pi$ as vertices.
The dimension of the Birkhoff polytope is $(n-1)^2$. Indeed in each column or row of the doubly stochastic matrix there are $n-1$ 
independent numbers, and the last one is determined by the constraint that the sum is one (general references on polytopes are \cite{POL1,POL2}).

We next introduce Birkhoff subpolytopes, which play an important role in the present paper. We note that every finite group is a subgroup of the symmetric group ${\cal S}$ (Cayley's theorem). 

\begin{definition}
Let ${\cal S}_1$ be a subgroup of the symmetric group ${\cal S}$ (${\cal S}_1\le {\cal S}$).
The Birkhoff subpolytope ${\cal B}(S_1)\subseteq {\cal B}$ consists of all doubly stochastic matrices which are convex combinations of the permutation matrices in ${\cal S}_1$:
\begin{eqnarray}\label{20A}
{\cal P}=\sum _{\pi\in{\cal S}_1}\lambda(\pi)M_\pi;\;\;\;\pi\in{\cal S}_1\le {\cal S};\;\;\;\sum _{\pi \in {\cal S}_1}\lambda(\pi)=1;\;\;\;\lambda(\pi)\ge 0.
\end{eqnarray}
\end{definition}
Products of doubly stochastic matrices in ${\cal B}(S_1)$ belong in ${\cal B}(S_1)$.
The matrix ${\bf 1}$ belongs in ${\cal B}(S_1)$.
If a permutation matrix $M_\pi\in {\cal B}(S_1)$ then its inverse $M_{\pi^{-1}}=M_\pi^\dagger \in {\cal B}(S_1)$.

\begin{example}\label{ex1}
We consider the additive group ${\mathbb Z}(n)$, and the  $n\times n$
permutation matrix
\begin{eqnarray}\label{30A}
X=\begin{pmatrix}
0&0&0&\cdots&1\\
1&0&0&\cdots&0\\
0&1&0&\cdots&0\\
\vdots&\vdots&\vdots&\ddots&\vdots\\
0&0&0&\cdots&0\\
\end{pmatrix};\;\;\;X^n={\bf 1}.
\end{eqnarray}
The operator $X$ in Eq.(\ref{A}) in the position basis, is the above matrix.
The permutation matrices ${\bf 1}, X, \cdots, X^{n-1}$ form a representation of ${\mathbb Z}(n)$, which is a subgroup of the symmetric group ${\cal S}$.
The Birkhoff subpolytope ${\cal B}[{\mathbb Z}(n)]$ contains all doubly stochastic matrices ${\cal P}$ which are convex combinations of the $n$ permutation matrices ${\bf 1}, X, \cdots, X^{n-1}$:
\begin{eqnarray}
 {\cal P}=\sum _{a=0}^{n-1}p(a)X^a;\;\;\;\sum _{a=0}^{n-1}p(a)=1.
 \end{eqnarray}
We note that $\frac{1}{n}J$ belongs in ${\cal B}[{\mathbb Z}(n)]$ and 
\begin{eqnarray}
 \frac{1}{n}J_n=\frac{1}{n}\sum _{a=0}^{n-1}X^a.
 \end{eqnarray}

\end{example}

\begin{example}\label{ex2}
We consider the additive group ${\mathbb Z}(n)\times {\mathbb Z}(n)\simeq HW(n)/{\mathbb Z}(n)$.
The $n^2$ permutation matrices 
\begin{eqnarray}\label{100}
M_{na+b}=X^a\otimes X^b;\;\;a,b=0,...,n-1, 
\end{eqnarray}
form a representation of the group ${\mathbb Z}(n)\times {\mathbb Z}(n)$, which is a subgroup of the symmetric group ${\cal S}$. We use the single index notation for labelling the permutation matrices $M$.

For $n=3$ we give some examples of the $9\times 9$ permutation matrices $M_{na+b}$ (written in a block notation):
\begin{eqnarray}
M_0={\bf 1};\;\;\;M_4=X\otimes X=\begin{pmatrix}
{\bf 0}&{\bf 0}&X\\
X&{\bf 0}&{\bf 0}\\
{\bf 0}&X&{\bf 0}
\end{pmatrix};\;\;\;
M_7=X^2 \otimes X=\begin{pmatrix}
{\bf 0}&X&{\bf 0}\\
{\bf 0}&{\bf 0}&X\\
X&{\bf 0}&{\bf 0}
\end{pmatrix}.
\end{eqnarray}

The Birkhoff subpolytope ${\cal B}[{\mathbb Z}(n)\times {\mathbb Z}(n)]$ contains all doubly stochastic matrices ${\cal P}$ which are convex combinations of the $n^2$ permutation matrices $M_{na+b}$.
\begin{eqnarray}
 {\cal P}=\sum _{r=0}^{n^2-1}p(r)M_r;\;\;\;\sum _{r=0}^{n^2-1}p(r)=1.
\end{eqnarray}
We note that $\frac{1}{n^2}J_{n^2}$ belongs in ${\cal B}[{\mathbb Z}(n)\times {\mathbb Z}(n)]$ and 
\begin{eqnarray}
 \frac{1}{n^2}J_{n^2}=\frac{1}{n^2}\sum _{\nu=0}^{n^2-1}M_\nu.
 \end{eqnarray}

\end{example}

\subsection{Polytopes of probability vectors}
If ${\bf x}$ is a probability vector then ${\bf y}={\bf x}{\cal P}$ is also a probability vector, and using Eq.(\ref{20}) we get
\begin{eqnarray}\label{21}
{\bf y}={\bf x}{\cal P}=\sum _{\pi}\lambda(\pi){\bf x}M_\pi;\;\;\;\sum _{\pi}\lambda(\pi)=1.
\end{eqnarray}
${\bf y}$ is a convex combination of vectors with components which are permutations of the components of  ${\bf x}$.
\begin{definition}\label{def1}
We represent each probability vector ${\bf x}=(x_0,...,x_{n-1})$ by a point in $(x_0,...,x_{n-2})\in [0,1]^{n-1}$ (which involves the first $n-1$ components, because the last component $x_{n-1}$ is not independent).
Then:
\begin{itemize}
\item
The polytope of a  probability vector ${\bf x}$ contains all probability vectors  ${\bf y}={\bf x}{\cal P}$,
 for all doubly stochastic matrices ${\cal P}\in{\cal B}$:
\begin{eqnarray}\label{112}
{\cal A}({\bf x})=\left\{{\bf x}{\cal P} |{\cal P}\in{\cal B}\right \}\subset[0,1]^{n-1}.
\end{eqnarray}
\item
The polytope of a  probability vector ${\bf x}$ with respect to the Birkhoff subpolytope ${\cal B}({\cal S}_1)$, contains all probability vectors  ${\bf y}={\bf x}{\cal P}$
 for all doubly stochastic matrices ${\cal P}\in{\cal B}({\cal S}_1)$:
\begin{eqnarray}\label{112A}
{\cal A}[{\bf x};{\cal B}({\cal S}_1)]=\left\{{\bf x}{\cal P} |{\cal P}\in{\cal B}({\cal S}_1)\right \}\subseteq {\cal A}({\bf x})\subset[0,1]^{n-1};\;\;\;{\cal S}_1\le {\cal S}.
\end{eqnarray}
\end{itemize}

\end{definition}
${\cal A}({\bf x})$ is a convex polytope that has ${\cal V}$  probability vectors ${\bf x}M_\pi$ as vertices (where ${\cal V}\le n!$ because some of them might be equal to each other).
The most uncertain probability vector ${\bf u}$ (Eq.(\ref{5})) belongs to ${\cal A}({\bf x})$.

We next define the simplex $\Delta _{n-1}$ which contains {\bf all} ($n$-dimensional) probability vectors ${\bf x}$:
\begin{eqnarray}
\Delta_{n-1}=\left \{(x_0,...,x_{n-2})\; |\;\sum _{i=0}^{n-2}x_i\le 1\right \}\subset[0,1]^{n-1}.
\end{eqnarray}
$\Delta_{n-1}$ is $(n-1)$-dimensional, and ${\cal A}({\bf x})\subseteq \Delta_{n-1}$. 

In the case of the most uncertain probability vector ${\bf u}$ in Eq.(\ref{5}), ${\cal A}({\bf u})=\{{\bf u}\}$ consists of a single point which can be viewed as the zero-dimensional simplex $\Delta _0$. Therefore
\begin{eqnarray}
\Delta_0\subseteq {\cal A}({\bf x})\subseteq \Delta_{n-1}.
\end{eqnarray}
The dimension of ${\cal A}({\bf x})$ is
\begin{eqnarray}
0\le {\rm dim}[{\cal A}({\bf x})]\le n-1.
\end{eqnarray}
From the definition \ref{def1} follows immediately that 
\begin{eqnarray}\label{12}
{\cal A}({\bf x}{\cal P})\subseteq {\cal A}({\bf x});\;\;\;{\cal P}\in{\cal B}.
\end{eqnarray}
Also
\begin{eqnarray}\label{12A}
{\cal A}[{\bf x}{\cal P};{\cal B}({\cal S}_1)]\subseteq {\cal A}[{\bf x};{\cal B}({\cal S}_1)];\;\;\;{\cal P}\in{\cal B}({\cal S}_1).
\end{eqnarray}
\begin{example}
We consider the probability vector ${\bf x}=(0.5,0.5,0)$. In this case there are $3$ different from each other vectors $M_\pi{\bf x}$.
\begin{eqnarray}
{\bf x}=(0.5,0.5,0);\;\;\;{\bf x}_1=(0,0.5,0.5);\;\;\;{\bf x}_2=(0.5,0,0.5).
\end{eqnarray}
Then ${\cal A}({\bf x})\subset [0,1]^2$ is a triangle with vertices $(0.5,0)$ and $(0, 0.5)$ and $(0.5,0.5)$.
\end{example}
\begin{example}
We consider the probability vector ${\bf x}=(0.5,0.5,0,0)$. In this case there are $6$ different from each other vectors $M_\pi{\bf x}$.
\begin{eqnarray}
{\bf x}=(0.5,0.5,0,0);\;\;\;{\bf x}_1=(0.5,0,0.5,0);\;\;\;{\bf x}_2=(0.5,0,0,0.5)\nonumber\\
{\bf x}_3=(0,0.5,0.5,0);\;\;\;{\bf x}_4=(0,0.5,0,0.5);\;\;\;{\bf x}_5=(0,0,0.5,0.5).
\end{eqnarray}
Then ${\cal A}({\bf x})\subset [0,1]^3$ is a polytope with $6$ vertices 
\begin{eqnarray}
(0.5,0.5,0);\;\;\;(0.5,0,0.5);\;\;\;(0.5,0,0);\;\;\;(0,0.5,0.5);\;\;\;(0,0.5,0);\;\;\;(0,0,0.5).
\end{eqnarray}

\end{example}

\subsection{Ergodic doubly stochastic matrices}

A $n\times n$ doubly stochastic matrix ${\cal P}$ has the eigenvalue $e_0=1$, and the other eigenvalues $e_1,...,e_{n-1}$ have absolute value $|e_i|\le 1$.
The left (row) eigenvector corresponding to the eigenvalue $e_0=1$, is ${\bf u}$ (Eq.(\ref{6})).
Therefore any doubly stochastic matrix can be written as
\begin{eqnarray}\label{743}
{\cal P}={\bf u}^\dagger {\bf u}+\sum _{r=1}^{n-1}e_r{\bf v}_r^\dagger{\bf v}_r=\frac{1}{n}J_n+\sum _{r=1}^{n-1}e_r{\bf v}_r^\dagger{\bf v}_r.
\end{eqnarray}
Here ${\bf v}_r$ is the left (row) eigenvector corresponding to the eigenvalue $e_r$.
The doubly stochastic matrix $\frac{1}{n}J_n$  has the eigenvalues $e_0=1$ and $e_r=0$ for $r=1,...,n-1$. 

\begin{proposition}\label{pro1}
Let ${\cal P}$ be a doubly stochastic matrix such that all eigenvalues of ${\cal P}$ apart from the eigenvalue $e_0=1$,
have absolute value  $|e_i|<1$. Then
\begin{eqnarray}\label{AA1}
\lim _{k\rightarrow\infty}{\cal P}^k=\frac{1}{n}J_n.
\end{eqnarray}
In this case we say that the doubly stochastic matrix is ergodic.
\end{proposition}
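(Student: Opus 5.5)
The plan is to avoid the spectral expansion of Eq.(\ref{743}). That expansion silently assumes ${\cal P}$ is diagonalizable with orthogonal eigenvectors, which fails for a general doubly stochastic matrix. Instead I will subtract off the projection onto ${\bf u}$ and show that the remainder has spectral radius strictly less than one.

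Define $Q={\cal P}-\frac{1}{n}J_n$. Because ${\cal P}$ is doubly stochastic, its row sums and column sums are all one, so ${\cal P}J_n=J_n{\cal P}=J_n$. We also have $J_n^2=nJ_n$.

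\emph{Step 1.} I will show by induction on $k$ that
\begin{eqnarray}
Q^k={\cal P}^k-\frac{1}{n}J_n.
\end{eqnarray}
Assuming it for $k$, we get
\begin{eqnarray}
Q^{k+1}=\left({\cal P}^k-\frac{1}{n}J_n\right)\left({\cal P}-\frac{1}{n}J_n\right)={\cal P}^{k+1}-\frac{1}{n}J_n-\frac{1}{n}J_n+\frac{1}{n}J_n,
\end{eqnarray}
which is the claim for $k+1$. Here I used ${\cal P}^kJ_n=J_n$ and $J_n{\cal P}=J_n$. The statement is therefore equivalent to $Q^k\to 0$.

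\emph{Step 2.} I will identify the spectrum of $Q$. Let ${\bf 1}^T$ be the all-ones column vector; then $Q{\bf 1}^T={\bf 1}^T-{\bf 1}^T=0$. The subspace $W=\{{\bf x}:\sum_a x_a=0\}$ of row vectors is invariant under right multiplication by ${\cal P}$, since column sums are one, and ${\bf x}J_n=0$ for ${\bf x}\in W$. Hence $Q$ acts on $W$ exactly as ${\cal P}$ does. With respect to the decomposition $\mathbb{C}^n={\rm span}({\bf u})\oplus W$, acting on row vectors from the right:
\begin{itemize}
\item ${\bf u}Q={\bf u}-{\bf u}=0$, so $Q$ is block-diagonal, with block $0$ on ${\bf u}$ and block ${\cal P}|_W$ on $W$;
\item ${\cal P}$ is block-diagonal with block $1$ on ${\bf u}$ and the same block ${\cal P}|_W$ on $W$.
\end{itemize}
The characteristic polynomial of ${\cal P}$ is therefore $(t-1)$ times that of ${\cal P}|_W$. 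By hypothesis, every eigenvalue other than the one copy $e_0=1$ has $|e_i|<1$. Hence all eigenvalues of ${\cal P}|_W$ have modulus $<1$, and the spectrum of $Q$ is $\{0,e_1,\dots,e_{n-1}\}$, with spectral radius $\rho(Q)<1$.

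\emph{Step 3.} I will conclude $Q^k\to 0$ from the standard fact that a matrix with spectral radius $<1$ has powers tending to zero. This can be seen from the Jordan form: each Jordan block $(eI+N)^k$ has entries of the form $\binom{k}{j}e^{k-j}$ with $j<n$, and these tend to $0$ when $|e|<1$. Alternatively, Gelfand's formula $\|Q^k\|^{1/k}\to\rho(Q)$ gives the same conclusion. Combined with Step 1, this yields Eq.(\ref{AA1}).

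The only delicate point is Step 2. One must read the hypothesis as saying that $1$ is an \emph{algebraically simple} eigenvalue, i.e.\ the remaining $n-1$ eigenvalues counted with multiplicity have modulus $<1$. One must also avoid assuming diagonalizability of ${\cal P}$; the block decomposition above handles non-diagonalizable ${\cal P}$ as well.
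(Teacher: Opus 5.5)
Your argument is correct, and it takes a genuinely different route from the paper's. The paper inserts the expansion ${\cal P}={\bf u}^\dagger{\bf u}+\sum_{r}e_r{\bf v}_r^\dagger{\bf v}_r$ of Eq.~(\ref{743}) and raises it to the $k$-th power. That takes two lines and shows the decay rate $e_{\rm max}^k$ directly. However, it presupposes an orthonormal eigenbasis, i.e.\ that ${\cal P}$ is normal. It also needs ${\bf u}$ normalised to unit length rather than unit sum, since otherwise ${\bf u}^\dagger{\bf u}=\frac{1}{n^2}J_n$.

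This is harmless for the matrices the paper actually uses later: elements of ${\cal B}(G)$ with $G$ Abelian are convex combinations of commuting unitary permutation matrices, hence normal. But the proposition is stated for an arbitrary doubly stochastic ${\cal P}$, and there that argument breaks down. For example, take ${\cal P}=\frac13J_3+\frac16{\bf a}^T{\bf b}$ with ${\bf a}=(1,-1,0)$ and ${\bf b}=(1,1,-2)$. It is doubly stochastic with eigenvalues $1,0,0$, and ${\cal P}^2=\frac13J_3\neq{\cal P}$. A diagonalizable matrix with spectrum in $\{0,1\}$ would be idempotent, so this ${\cal P}$ is not diagonalizable, and Eq.~(\ref{743}) would wrongly force ${\cal P}=\frac13J_3$.

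Your deflation $Q={\cal P}-\frac1nJ_n$, the identity $Q^k={\cal P}^k-\frac1nJ_n$, and the block splitting over ${\rm span}({\bf u})\oplus W$ need no diagonalizability at all. The cost is citing the standard lemma that spectral radius $<1$ forces $Q^k\to0$. The rate you get is $e_{\rm max}^k$ only up to a factor polynomial in $k$, which is the correct statement in the non-normal case.

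One small slip: invariance of $W$ under right multiplication comes from the row sums, $\sum_b({\bf x}{\cal P})_b=\sum_a x_a\sum_b{\cal P}_{ab}$. It is ${\bf u}{\cal P}={\bf u}$ that uses the column sums. Since ${\cal P}$ is doubly stochastic, this does not affect anything.
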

\begin{proof}
${\cal P}^k$ has the eigenvalues $e_0^k=1$, and $e_r^k$ with $|e_r|^k\rightarrow\;0$ as $k\rightarrow\infty$.
Then 
\begin{eqnarray}
{\cal P}^k={\bf u}^\dagger{\bf u}+\sum _{r=1}^{n-1}(e_r)^k{\bf v}_r^\dagger{\bf v}_r\rightarrow \frac{1}{n}J_n.
\end{eqnarray}

\end{proof}
The $e_{\rm max}=\max \{|e_1|,...,|e_{n-1}|\}$ defines how quickly we will reach the limit in Eq.(\ref{AA1}).
The $(e_{\rm max})^k$ should be close to zero
\begin{eqnarray}
e_{\rm max}^k\approx 0\Rightarrow k\gg \frac {1}{|\log e_{\rm max}|}.
\end{eqnarray}

\section{Quantities that describe probability vectors}

\subsection{Sparsity of probability vectors: Majorization, Lorenz values and the Gini index}

Lorenz values and the Gini index\cite{Gini,Gini1} are statistical quantities which are mainly used in Mathematical Economics (for their use in a quantum context see ref\cite{VOU2}).
Related is the preorder majorization \cite{MAJ1,MAJ2,MAJ3}.
Here we introduce briefly these quantities for probability vectors, in order to establish the notation.
We also state without proof, some of their known properties which are needed later.

 Let $\pi$ be the permutation that orders the probabilities of a probability vector ${\bf x}$, in ascending order: 
\begin{eqnarray}\label{pp}
x[\pi(0)]\le x[\pi(1)]\le ...\le x[\pi(n-1)].
\end{eqnarray}

The Lorenz values of this probability vector are defined as
\begin{eqnarray}
{\cal L}(a;{\bf x})=x[\pi(0)]+...+ x[\pi(a)];\;\;\;{\cal L}(n-1;{\bf x})=1;\;\;\;a=0,...,n-1.
\end{eqnarray}
The Lorenz values are cumulative probabilities with respect to the ascending order.
For a general probability vector
\begin{eqnarray}\label{24}
0\le {\cal L}(a;{\bf x})\le \frac{a+1}{n};\;\;\;{\cal L}(a;{\bf x})={\cal L}(a;{\bf x}M_\pi).
\end{eqnarray}
The Lorenz values for the most uncertain probability vector ${\bf u}$ and the certain probability vector ${\bf c}$, are
\begin{eqnarray}
{\cal L}(a;{\bf u})=\frac{a+1}{n};\;\;\;{\cal L}(a;{\bf c})=\delta(a, n-1).
\end{eqnarray}

Majorization \cite{MAJ1,MAJ2,MAJ3} is a preorder based on Lorenz values and is defined here for probability vectors as follows.
 The probability vector  ${\bf x}$ majorizes ${\bf y}$ (or ${\bf x}$ is more sparse than ${\bf y}$) if ${\cal L}(a;{\bf x})\le {\cal L}(a;{\bf y})$ for all $a$.
This is denoted as ${\bf x}\succ {\bf y}$.
For any probability vector ${\bf x}$
\begin{eqnarray}\label{19}
{\bf u}\prec{\bf x}\prec {\bf c}.
\end{eqnarray}
It is known \cite{MAJ1,MAJ2,MAJ3} that ${\bf x}\succ {\bf y}$ if and only if there exists a doubly stochastic matrix ${\cal P}$ such that ${\bf y}={\cal P}{\bf x}$:
\begin{eqnarray}\label{AA}
{\bf x}\succ {\bf y}\Leftrightarrow {\bf y}={\cal P}{\bf x}.
\end{eqnarray}

The Gini index is defined by the following relations which are known to be equivalent to each other:
\begin{eqnarray}\label{85}
{\cal G}({\bf x})&=&1-\frac{2}{n+1}\sum _{a=0}^{n-1}{\cal L}(a;{\bf x})=
1-\frac{2}{n+1}\{nx[\pi(0)]+(n-1)x[\pi(1)]+...+x[\pi(n-1)]\}\nonumber\\&=&\frac{1}{2(n+1)}\sum _{a,b}|x_a-x_b|
\end{eqnarray}
It takes values in the interval
\begin{eqnarray}\label{www}
0\le {\cal G}({\bf x})\le \frac{n-1}{n+1}.
\end{eqnarray}
Large values of the Gini index indicate sparse (certain) probability vectors.
Small values of the Gini index indicate uncertain probability vectors.
For the vectors ${\bf u}$ and ${\bf c}$ we get:
\begin{eqnarray}
{\cal G}({\bf u})=0;\;\;\;{\cal G}({\bf c})=\frac{n-1}{n+1}
\end{eqnarray}

If ${\cal P}$ is a doubly stochastic matrix, then 
\begin{eqnarray}\label{PA1}
{\cal G}({\bf x}{\cal P})\le {\cal G}({\bf x}).
\end{eqnarray}
For a permutation matrix $M_\pi$ we get
\begin{eqnarray}\label{P2}
{\cal G}({\bf x}M_{\pi})={\cal G}({\bf x}).
\end{eqnarray}

\subsection{Total variation distance between two probability vectors}

Let ${\bf x}$ and ${\bf y}$ be two probability vectors. ${\cal I}=\{0,...,n-1\}$ is the set of their indices.
If $A\subseteq {\cal I}$ we define 
\begin{eqnarray}
x(A)=\sum_{a\in A}x_a.
\end{eqnarray}

The total variation distance between these two probability vectors is defined as\cite{G1,G2,G3,G4,G5}
\begin{eqnarray}\label{346}
||{\bf x}-{\bf y}||=\sup_{A\subseteq {\cal I}}|x(A)-y(A)|=\frac{1}{2}\sum _a|x_a-y_a|;\;\;\;0\le ||{\bf x}-{\bf y}||\le 1.
\end{eqnarray}
In relation to this we prove the following:
\begin{itemize}
\item
The two expressions in Eq.(\ref{346}) are indeed equal to each other. The supremum is achieved for the set $A_1=\{a|x_a\ge y_a)\}$ and then
\begin{eqnarray}
||{\bf x}-{\bf y}||=|x(A_1)-y(A_1)|=\sum _{a\in A_1}(x_a-y_a);\;\;x_a>y_a.
\end{eqnarray}
It is easily seen that the right hand side is equal to $\frac{1}{2}\sum _a|x_a-y_a|$.
\item
Since $0\le x(A)\le 1$ and $0\le y(A)\le 1$, we conclude that
\begin{eqnarray}
0\le ||{\bf x}-{\bf y}||\le 1.
\end{eqnarray}
\item
$||{\bf x}-{\bf y}||$ is a distance because it is easily seen that
\begin{eqnarray}
&&||{\bf x}-{\bf x}||=0;\;\;\;||{\bf x}-{\bf y}||\ge 0;\;\;\;||{\bf x}-{\bf y}||=||{\bf y}-{\bf x}||\nonumber\\
&&||{\bf x}-{\bf y}||+||{\bf y}-{\bf z}||\ge ||{\bf x}-{\bf z}||.
\end{eqnarray}
For later use we prove the following lemma
\begin{lemma}
For any doubly stochastic matrix ${\cal P}$
\begin{eqnarray}\label{45}
||{\bf x}-{\bf y}||\ge ||{\bf x}{\cal P}-{\bf y}{\cal P}||.
\end{eqnarray}
If ${\bf y}={\bf u}$, then
\begin{eqnarray}\label{46A}
||{\bf x}-{\bf u}||\ge ||{\bf x}{\cal P}-{\bf u}||.
\end{eqnarray}
\end{lemma}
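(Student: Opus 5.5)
Set ${\bf z}={\bf x}-{\bf y}$. This is a real row vector with $\sum_a z_a=0$, and by Eq.(\ref{346}) we have $||{\bf x}-{\bf y}||=\frac{1}{2}\sum_a|z_a|$. By linearity ${\bf x}{\cal P}-{\bf y}{\cal P}={\bf z}{\cal P}$, so the claim reduces to the $\ell_1$ contraction
\begin{eqnarray}
\sum_b\left|\sum_a z_a{\cal P}_{ab}\right|\le \sum_a|z_a|.
\end{eqnarray}
I will prove this directly from the defining properties of a doubly stochastic matrix.

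The key step is the triangle inequality applied to each component, using ${\cal P}_{ab}\ge 0$:
\begin{eqnarray}
\left|\sum_a z_a{\cal P}_{ab}\right|\le\sum_a|z_a|{\cal P}_{ab}.
\end{eqnarray}
Summing over $b$ and exchanging the order of summation gives $\sum_a|z_a|\sum_b{\cal P}_{ab}$. The row-sum condition $\sum_b{\cal P}_{ab}=1$ (Eq.(\ref{A7})) then reduces this to $\sum_a|z_a|$. Dividing by $2$ yields Eq.(\ref{45}).

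This argument uses only that ${\cal P}$ is row stochastic. The column condition is needed only for the second statement. There, Eq.(\ref{6}) gives ${\bf u}{\cal P}={\bf u}$ for doubly stochastic ${\cal P}$, so setting ${\bf y}={\bf u}$ in Eq.(\ref{45}) turns ${\bf y}{\cal P}$ into ${\bf u}$, which is Eq.(\ref{46A}).

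As an alternative route, I could use the supremum form of Eq.(\ref{346}) with the optimal set $A$ for ${\bf z}{\cal P}$, and bound $\sum_{b\in A}({\bf z}{\cal P})_b=\sum_a z_a\,{\cal P}(a,A)$ with $0\le{\cal P}(a,A)\le1$. The $\ell_1$ argument is simpler, so I will use it.

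There is no real obstacle here. The only point needing care is the order of the indices: with the row-vector convention ${\bf x}{\cal P}$, the relevant normalization is the row sum $\sum_b{\cal P}_{ab}=1$, not the column sum.
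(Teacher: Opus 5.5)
Your proof is correct and follows the paper's argument essentially verbatim: the componentwise triangle inequality using ${\cal P}_{ab}\ge 0$, exchanging the sums, and the row-sum normalization, followed by ${\bf u}{\cal P}={\bf u}$ for the second claim. Your remark that the first inequality needs only row stochasticity, with the column condition entering only through ${\bf u}{\cal P}={\bf u}$, is accurate and applies equally to the paper's proof.
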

\begin{proof}
\begin{eqnarray}
||{\bf x}{\cal P}-{\bf y}{\cal P}||&=&\frac{1}{2}\sum_b\left | \sum_a x_a{\cal P}_{ab}-y_a{\cal P}_{ab}\right |\le \frac{1}{2}\sum_b\sum_a\left | x_a{\cal P}_{ab}-y_a{\cal P}_{ab}\right |\nonumber\\
&=&\frac{1}{2}\sum_a\left | x_a-y_a\right |\left (\sum _b{\cal P}_{ab}\right )=\frac{1}{2}\sum_a\left | x_a-y_a\right |=||{\bf x}-{\bf y}||.
\end{eqnarray}
If ${\bf y}={\bf u}$, then ${\bf u}{\cal P}={\bf u}$ and follows Eq.(\ref{46A}).

\end{proof}

\end{itemize}

\subsection{Entropic quantities}

We define the entropy of a probability vector as
\begin{eqnarray}\label{86}
E({\bf x})=-\sum _ax_a\log x_a;\;\;\;0\le E({\bf x})\le \log n
\end{eqnarray} 
$E({\bf c})=0$ for the most certain probability vector, and $E({\bf u})=\log n$ for the most uncertain vector.

It is known that for any doubly stochastic matrix ${\cal P}$ 
\begin{eqnarray}\label{11}
E({\bf x})\le E({\bf x}{\cal P}).
\end{eqnarray}

\section{Random walks on finite Abelian groups using Markov chains with doubly stochastic matrices in a Birkhoff subpolytope}

We consider random walks on a finite Abelian multiplicative group $G$. The approach in refs. \cite{G1,G2,G3,G4,G5} was to use Fourier transforms 
to reduce convolutions that appear in the formalism (Eq.(\ref{325}) below with the assumption in Eq.(\ref{36A})) into multiplications. 
Problems discussed in these references are: how quickly we get close to  stationarity, the cutoff phenomenon\cite{G3}, shuffling of cards, etc.

Here we follow another complementary approach based on 
Markov chains with doubly stochastic matrices in the Birkhoff subpolytope ${\cal B}(G)$.
After the general formalism we apply these ideas to two examples: the group ${\mathbb Z}(d)$ and the group $HW(d)/{\mathbb Z}(d)\simeq  {\mathbb Z}(d)\times {\mathbb Z}(d)$.

We assume that $G$ has $\ell$ elements $g_0,...,g_{\ell-1}$. $1$ is the unit element, and $g^{-1}$ is the inverse of $g$.  To every element $g_a\in G$, we associate a probability $p(g_a)$ such that
\begin{eqnarray}\label{10A}
\sum _{a=0}^{\ell-1}p(g_a)=1.
\end{eqnarray}
We consider a random walk in the group $G$, with the time taking discrete values $t=0,1,...$. At time $t=0$ we have  the element $g_{a_0}$ with probability $p(g_{a_0})$
(each index has another index which shows the discrete time).  At time $t=n$ we get the element $g_{a_n}$ with probability 
\begin{eqnarray}\label{325}
q^{(n)}(g_{a_n})=\sum _{a_0,...,a_{n-1}}p(g_{a_0})p(g_{a_0},g_{a_1})\cdots p(g_{a_{n-1}},g_{a_n});\;\;\;a_r=0,...,\ell-1.
\end{eqnarray}
The superfix $(n)$ indicates the discrete time.
Here we assume the Markov property, according to which the next step depends only on the present step, and is independent of the past steps. For example, from $t=0$ to $t=1$ the transition probability is
$p(g_{a_0},g_{a_1})$ and does not depend on what happened the previous times $t=-1, -2,...$. Also
\begin{eqnarray}\label{3}
\sum _{a_{k+1}=0}^{\ell-1} p(g_{a_k},g_{a_{k+1}})=1.
\end{eqnarray}
This is because from $g_{a_k}$, we will go to some element from the group with probability one.
We can rewrite Eq.(\ref{325}) in terms of matrices as
\begin{eqnarray}\label{321}
q^{(n)}=q^{(0)}{\cal P}(0,1)\cdots {\cal P}(n-1,n).
\end{eqnarray}
Here:
\begin{itemize}
\item
 ${\cal P}(k,k+1)$  is an $\ell \times \ell$ transition matrix with elements ${\cal P}_{ab}(k,k+1)=p(g_{a},g_{b})$.
All transition matrices ${\cal P}(0,1)$, ${\cal P}(1,2)$, etc, are assumed to be equal to each other and will be denoted as ${\cal P}$.
The transition matrix ${\cal P}$ is a row stochastic matrix, because of Eq.(\ref{3}).
\item
$q^{(n)}(g_{a_n})$ (with $a_n=0,...,\ell-1$) is the probability vector that gives the probability to be in the element $g_{a_n}$ at time $n$. 
Also $q^{(0)}(g_{a_0})=p(g_{a})$.
\end{itemize}
Then Eq.(\ref{321}) becomes
\begin{eqnarray}\label{320}
q^{(n)}=q^{(0)}{\cal P}^n.
\end{eqnarray}

As in refs\cite{G1,G2,G3,G4,G5} we make the following extra assumption.
The transition probability is taken to be 
\begin{eqnarray}\label{36A}
p(g_a,g_b)=p(g_a^{-1}g_b).
\end{eqnarray}
$p(g_a,g_b)$ depends only on  $g_a^{-1}g_b$.
In this case
\begin{eqnarray}\label{30}
\sum _{a=0}^{\ell-1} p(g_a,g_b)=\sum _{a=0}^{\ell-1}p(g_a^{-1}g_b)=1.
\end{eqnarray}
With this extra assumption, the transition matrix ${\cal P}$ is the $\ell\times\ell$ doubly stochastic matrix
\begin{eqnarray}\label{111}
{\cal P}=\begin{pmatrix}
p(1)&p(g_0^{-1}g_1)&p(g_0^{-1}g_2)&\cdots&p(g_0^{-1}g_{\ell-1})\\
p(g_1^{-1}g_0)&p(1)&p(g_1^{-1}g_2)&\cdots &p(g_1^{-1}g_{\ell-1})\\
\vdots&\vdots&\vdots&\vdots&\vdots\\ 
p(g_{\ell-1}^{-1}g_0)&p(g_{\ell-1}^{-1}g_1)&p(g_{\ell-1}^{-1}g_2)&\cdots&p(1)\\
\end{pmatrix}
\end{eqnarray}
The probabilities $p(g_a)$ in Eq.(\ref{10A}) define uniquely the doubly stochastic transition matrix ${\cal P}$.

\begin{proposition}\label{pro3}
\mbox{}
\begin{itemize}
\item[(1)]
If one of the probabilities $p(g_r)=1$ (and $p(g_s)=0$ for $r\ne s$), then the matrix  in Eq.(\ref{111}) is a permutation matrix denoted as $M(r)$.
\item[(2)]
The $\ell$ permutation matrices $M(r)$ form a subgroup of the symmetric group , which is isomorphic to $G$.
\item[(3)]
The matrices ${\cal P}$ in Eq.(\ref{111}) form the Birkoff subpolytope ${\cal B}(G)$.
\end{itemize}
\end{proposition}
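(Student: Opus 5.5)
We plan to work directly with the matrix in Eq.(\ref{111}), whose $(a,b)$ entry is $p(g_a^{-1}g_b)$, and to exploit linearity of this assignment in the probability vector $p$.

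For part (1), we take $p=\delta_{g_r}$, i.e.\ $p(g_r)=1$ and $p(g_s)=0$ for $s\neq r$. Then the $(a,b)$ entry of ${\cal P}$ equals $1$ exactly when $g_a^{-1}g_b=g_r$, that is when $g_b=g_ag_r$, and equals $0$ otherwise. We define $\pi_r(a)$ as the index $b$ with $g_b=g_ag_r$. Right multiplication by $g_r$ is a bijection of $G$, so $\pi_r$ is a permutation of $\{0,\dots,\ell-1\}$. Hence ${\cal P}=M_{\pi_r}$ in the sense of Eq.(\ref{PP}), and we denote it $M(r)$.

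For part (2), we will show that $r\mapsto M(r)$ is an injective homomorphism into the permutation matrices. Injectivity holds because the entry of $M(r)$ in row $0$ and column $b$ is $1$ exactly when $g_b=g_0g_r$, and this determines $r$. For the product, we compute directly:
\[
[M(r)M(s)]_{ac}=\sum_b \delta(g_b,g_ag_r)\,\delta(g_c,g_bg_s)=\delta(g_c,g_ag_rg_s).
\]
So $M(r)M(s)=M(t)$, where $g_t=g_rg_s$. The identity maps to ${\bf 1}$, and inverses map to inverses. The image is therefore a subgroup of ${\cal S}$ isomorphic to $G$.

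There is one point to handle carefully. The paper's convention is $M_{\pi_2}M_{\pi_1}=M_{\pi_1\circ\pi_2}$, which reverses order, so in general we would obtain an anti-isomorphism. Because $G$ is Abelian, order does not matter and the map is a genuine isomorphism. Alternatively, the direct entry computation above sidesteps the composition convention entirely.

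For part (3), linearity gives the key identity
\[
{\cal P}=\sum_r p(g_r)M(r).
\]
It holds because the $(a,b)$ entry of the right-hand side is $\sum_r p(g_r)\,\delta(g_a^{-1}g_b,g_r)=p(g_a^{-1}g_b)$. Since $p(g_r)\ge 0$ and $\sum_r p(g_r)=1$, every matrix of the form Eq.(\ref{111}) is a convex combination of the $M(r)$, so it lies in ${\cal B}(G)$ as given by Eq.(\ref{20A}) with ${\cal S}_1=\{M(r)\}$. Conversely, any convex combination $\sum_r\lambda_rM(r)$ is of the form Eq.(\ref{111}) with $p(g_r)=\lambda_r$. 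Hence the set of matrices in Eq.(\ref{111}) is exactly ${\cal B}(G)$.

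The only real subtlety is the bookkeeping in part (2): matching the composition convention and the left/right action, which the Abelian hypothesis resolves. The remaining steps are direct verification.
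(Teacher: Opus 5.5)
Your proposal is correct and follows the paper's proof essentially step for step: the criterion $g_b=g_ag_r$ for the nonzero entries in (1), the correspondence between group multiplication and products of the permutations/matrices in (2), and the linear decomposition ${\cal P}=\sum_r p(g_r)M(r)$ in (3). Your explicit injectivity check and the converse inclusion in (3) fill in points the paper leaves implicit. One small refinement: your own entry computation gives $M(r)M(s)=M(t)$ with $g_t=g_rg_s$ for any group, and the order reversal arises only for $r\mapsto\pi_r$ under composition, where the paper silently writes $g_rg_s$ for $g_sg_r$. So commutativity is not actually needed for (2).
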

\begin{proof}
\mbox{}
\begin{itemize}
\item[(1)]
We show that when $p(g_r)=1$ (and $p(g_s)=0$ for $r\ne s$), each row and each column of $M(r)$ have exactly one element which is $1$, and the other $n-1$ elements are $0$.

Each row and each column have all probabilities $p(g_0)$,...,$p(g_{\ell-1})$ (each appearing once).
The $(a+1)$-row has the element $p(g_a^{-1}g_b)$ in the $(b+1)$-column, and is $1$ in the $(b+1)$-column such that $g_a^{-1}g_b=g_r$ (therefore $g_b=g_ag_r$)
and $0$ in all other columns.
Therefore each row has exactly one element equal to $1$, and all the others are $0$.
In a similar way we prove that  each column  has exactly one element equal to $1$, and all the others are $0$.
This proves that when $p(g_r)=1$, then $M(r)$ is a permutation matrix.

\item[(2)]
If ${\cal I}=\{0,..., \ell-1\}$ is the set of indices, the relation $g_a^{-1}g_b=g_r$ creates a bijective map $b=\pi_r(a)$ from ${\cal I}$ to itself, i.e., a permutation. 
We call ${\cal S}_1$ the set of these $\ell$ permutations. To each element $g_r\in G$ corresponds a permutation $\pi_r\in {\cal S}_1$.
We will show that ${\cal S}_1$ with multiplication (composition) is a subgroup of the symmetric group , which is isomorphic to $G$.

For all $a\in {\cal I}$, we get
\begin{eqnarray}
g_a^{-1}g_{\pi_r(a)}=g_r.
\end{eqnarray}
Also if $a=\pi_s(c)$ then
\begin{eqnarray}
g_c^{-1}g_a=g_c^{-1}g_{\pi_s(c)}=g_s.
\end{eqnarray}
We multiply these two relations, and we get
\begin{eqnarray}
g_c^{-1}g_{\pi_r(a)}=g_c^{-1}g_{\pi_r[\pi_s(c)]}=g_c^{-1}g_{\pi_r\circ \pi_s(c)}=g_rg_s.
\end{eqnarray}
This proves that to the element $g_rg_s\in G$ corresponds the permutation $\pi_r\circ \pi_s$.
So the product (composition) of permutations in ${\cal S}_1$ belongs in ${\cal S}_1$ (closure property).
To the unity element $1\in G$ corresponds the unity permutation ${\bf 1}\in {\cal S}_1$.
Also to $g_r^{-1}\in G$ corresponds the permutation $\pi_r^{-1}\in {\cal S}_1$.
Therefore ${\cal S}_1$ is a group isomorphic to $G$.
In terms of permutation matrices, to $\pi_r$ corresponds the $M_{ab}(r)=\delta(\pi_r(a),b)$. 

\item[(3)]
${\cal P}$ is a linear function of the various probabilities. 
By definition $M(r)$ is a permutation matrix, that we get when $p(g_r)=1$ (and $p(g_s)=0$ for $r\ne s$) in ${\cal P}$. It follows that
\begin{eqnarray}
{\cal P}=\sum_{r=0}^{\ell-1}p(g_r)M(r).
\end{eqnarray}
Therefore the matrices ${\cal P}$ in Eq.(\ref{111}) are convex combinations of the permutation matrices $M(r)$ which form a representation of the group $G$.
Consequently the matrices ${\cal P}$ form the Birkoff subpolytope ${\cal B}(G)$.

\end{itemize}
\end{proof}

\begin{proposition}\label{pro5}
We consider a random walk on a finite Abelian group $G$ with the doubly stochastic matrix ${\cal P}$ in Eq.(\ref{111}). Then
\begin{itemize}
\item[(1)]
All probability vectors $q^{(n)}$ with $n\ge k$ are in the subpolytope ${\cal A}[q^{(k)};{\cal B}(G)]$ (defined in general in Eq.(\ref{112A})) which depends only on $q^{(k)}$ and {\bf does not depend} on the transition matrix ${\cal P}$.
\item[(2)]
During time evolution:
\begin{itemize}
\item
The probability vectors at different times are ordered according to the majorization preorder as follows:
\begin{eqnarray}
q^{(0)}\succ q^{(1)}\succ q^{(2)}\succ \cdots.
\end{eqnarray}
\item
The entropy of the probability vectors increases:
\begin{eqnarray}
E[q^{(0)})]\le E[q^{(1)}]\le E[q^{(2)}]\le \cdots.
\end{eqnarray}
\item
The Gini index of the probability vectors decreases:
\begin{eqnarray}
{\cal G}[q^{(0)})]\ge {\cal G}[q^{(1)}]\ge {\cal G}[q^{(2)}]\ge \cdots.
\end{eqnarray}

\item
The polytopes of the probability vectors shrink in time:
\begin{eqnarray}
{\cal A}[q^{(0)};{\cal B}(G)]\supseteq {\cal A}[q^{(1)};{\cal B}(G)]\supseteq {\cal A}[q^{(2)};{\cal B}(G)]\supseteq \cdots .
\end{eqnarray}

\end{itemize}

\end{itemize}
\end{proposition}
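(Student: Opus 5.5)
The whole argument rests on Proposition \ref{pro3}(3), which says ${\cal P}\in{\cal B}(G)$, together with the fact that ${\cal B}(G)$ is closed under matrix multiplication. That closure holds because the $M(r)$ form a group, so a product of two convex combinations of the $M(r)$ is again a convex combination of the $M(r)$. With this in hand, every claim reduces to a one-step statement about a doubly stochastic matrix, followed by induction on time using Eq.(\ref{320}), $q^{(n)}=q^{(k)}{\cal P}^{n-k}$.

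For part (1), fix $n\ge k$ and write $q^{(n)}=q^{(k)}{\cal P}^{n-k}$. By closure, ${\cal P}^{n-k}\in{\cal B}(G)$; for $n=k$ we use ${\cal P}^0={\bf 1}\in{\cal B}(G)$. Definition \ref{def1} then gives $q^{(n)}\in{\cal A}[q^{(k)};{\cal B}(G)]$. This set is defined only through $q^{(k)}$ and the vertex set $\{M(r)\}$, which depends only on $G$ and not on the probabilities $p(g_r)$. Explicitly, ${\cal A}[q^{(k)};{\cal B}(G)]$ is the convex hull of the points $q^{(k)}M(r)$, so it is independent of ${\cal P}$.

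For part (2) we use one step at a time: $q^{(n+1)}=q^{(n)}{\cal P}$ with ${\cal P}$ doubly stochastic.

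\emph{Majorization.} Eq.(\ref{AA}) is stated with column convention, ${\bf y}={\cal P}{\bf x}$. In row form, ${\bf y}={\bf x}{\cal P}$ is the same as ${\bf y}^T={\cal P}^T{\bf x}^T$, and ${\cal P}^T$ is also doubly stochastic. So $q^{(n)}\succ q^{(n+1)}$, and the chain follows by listing consecutive steps; transitivity of the preorder also gives comparisons between non-consecutive times.

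\emph{Entropy.} Apply Eq.(\ref{11}) with ${\bf x}=q^{(n)}$.

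\emph{Gini index.} Apply Eq.(\ref{PA1}) with ${\bf x}=q^{(n)}$.

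\emph{Shrinking polytopes.} Apply Eq.(\ref{12A}) with ${\cal S}_1$ the group of the $M(r)$, which is legitimate since ${\cal P}\in{\cal B}(G)$. If one wants it directly: any ${\bf y}=q^{(n+1)}{\cal Q}$ with ${\cal Q}\in{\cal B}(G)$ equals $q^{(n)}({\cal P}{\cal Q})$, and ${\cal P}{\cal Q}\in{\cal B}(G)$ by closure.

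The only step that needs care is the closure of ${\cal B}(G)$ under products. Expanding, ${\cal P}{\cal Q}=\sum_{r,s}p(g_r)q(g_s)M(r)M(s)$. By Proposition \ref{pro3}(2), each $M(r)M(s)$ is some $M(t)$; the order reversal $M_{\pi_2}M_{\pi_1}=M_{\pi_1\circ\pi_2}$ is harmless because $G$ is Abelian, and in any case the product stays in the group. The coefficients $p(g_r)q(g_s)$ are nonnegative and sum to one, so ${\cal P}{\cal Q}$ is a convex combination of the $M(t)$. The rest is bookkeeping, including the minor convention mismatch in Eq.(\ref{AA}).
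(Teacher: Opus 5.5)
Your proposal is correct and is essentially the paper's proof: write $q^{(n)}=q^{(m)}{\cal P}^{n-m}$ with ${\cal P}^{n-m}\in{\cal B}(G)$, then invoke Eqs.~(\ref{AA}), (\ref{11}), (\ref{PA1}) and (\ref{12A}). The only difference is that you spell out two things the paper takes for granted: that ${\cal B}(G)$ is closed under products (via the group structure of the $M(r)$), and the transpose needed to match the column convention of Eq.~(\ref{AA}) with the row convention used for $q^{(n)}$.
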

\begin{proof}
\mbox{}
\begin{itemize}
\item[(1)]
By definition, the polytope ${\cal A}[q^{(k)};{\cal B}(G)]$ contains all vectors $q^{(k)}{\cal P}^{n-k}$ for any doubly stochastic transition matrix in the Birkhoff subpolytope ${\cal B}(G)$.
Therefore it contains all probability vectors $q^{(n)}$ with $n\ge k$.
\item[(2)]
If $m\le n$ we get $q^{(n)}=q^{(m)}{\cal P}^{n-m}$, and from Eqs (\ref{AA}), (\ref{11}), (\ref{PA1}), (\ref{12A}) follows  that 
\begin{eqnarray}
q^{(m)}\succ q^{(n)};\;\;\;E(q^{(m)})\le E(q^{(n)});\;\;\;{\cal G}[q^{(m)})]\ge {\cal G}[q^{(n)}];\;\;\;{\cal A}[q^{(m)};{\cal B}(G)]\supseteq {\cal A}[q^{(n)};{\cal B}(G)]
\end{eqnarray}

\end{itemize}
\end{proof}

\begin{proposition}\label{pro6}
We consider a random walk on a finite Abelian group $G$ with the {\bf ergodic} doubly stochastic matrix ${\cal P}$ in Eq.(\ref{AA1}). Then
\begin{eqnarray}
\lim _{k\rightarrow\infty}q^{(k)}={\bf u};\;\;\;\lim _{k\rightarrow\infty}E[q^{(k)})]=\log \ell;\;\;\;\lim _{k\rightarrow\infty}{\cal G}[q^{(k)})]=0.
\end{eqnarray}
Also the polytope ${\cal A}(q^{(k)})$ will shrink into the simplex $\Delta_0$ (a single point), in the  $k\rightarrow\infty$ limit.
In this case during time evolution the total variation distance of $q^{(k)}$ from ${\bf u}$ decreases:
\begin{eqnarray}\label{45A}
||q^{(0)}-{\bf u}||\ge ||q^{(1)}-{\bf u}||\ge ||q^{(2)}-{\bf u}||\ge \cdots.
\end{eqnarray}
\end{proposition}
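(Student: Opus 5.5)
The plan is to combine Proposition \ref{pro1} with the continuity of the quantities involved and the monotonicity results proved earlier.

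\textbf{Limit of the probability vectors.} By Eq.(\ref{320}) we have $q^{(k)}=q^{(0)}{\cal P}^k$. Since ${\cal P}$ is ergodic, Proposition \ref{pro1} gives ${\cal P}^k\rightarrow\frac{1}{\ell}J_\ell$ entrywise. Matrix multiplication by the fixed vector $q^{(0)}$ is continuous, so
\begin{eqnarray}
q^{(k)}\rightarrow q^{(0)}\left(\tfrac{1}{\ell}J_\ell\right)={\bf u},
\end{eqnarray}
using the identity ${\bf x}(\frac{1}{n}J_n)={\bf u}$ stated earlier.

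\textbf{Entropy and Gini index.} The entropy $E$ in Eq.(\ref{86}) is continuous on the simplex, with the convention $0\log 0=0$ and the continuity of $x\log x$ at $0$. Hence $E[q^{(k)}]\rightarrow E({\bf u})=\log\ell$. The Gini index is continuous in ${\bf x}$ through the expression $\frac{1}{2(n+1)}\sum_{a,b}|x_a-x_b|$ in Eq.(\ref{85}). Hence ${\cal G}[q^{(k)}]\rightarrow{\cal G}({\bf u})=0$. Proposition \ref{pro5} additionally shows that these limits are approached monotonically.

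\textbf{Shrinking of the polytope.} ${\cal A}(q^{(k)})$ is the convex hull of the vertices $q^{(k)}M_\pi$. Each vertex satisfies
\begin{eqnarray}
||q^{(k)}M_\pi-{\bf u}||=||q^{(k)}-{\bf u}||,
\end{eqnarray}
because ${\bf u}M_\pi={\bf u}$ and a permutation only reorders the terms in the sum $\frac{1}{2}\sum_a|x_a-u_a|$. By convexity of the norm, every point of ${\cal A}(q^{(k)})$ lies within distance $||q^{(k)}-{\bf u}||$ of ${\bf u}$, and this distance tends to $0$. So the polytope, which is nested by Eq.(\ref{12}), collapses to the single point $\{{\bf u}\}=\Delta_0$ in the sense of Hausdorff distance. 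The same argument applies to ${\cal A}[q^{(k)};{\cal B}(G)]\subseteq{\cal A}(q^{(k)})$.

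\textbf{Total variation distance.} The monotone chain (\ref{45A}) follows at once from Eq.(\ref{46A}) with ${\bf x}=q^{(k)}$, which gives $||q^{(k)}-{\bf u}||\ge||q^{(k+1)}-{\bf u}||$. Ergodicity is not needed for monotonicity; it only guarantees that the limit of this chain is $0$.

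\textbf{Main obstacle.} The only delicate point is making ``shrinks into $\Delta_0$'' precise. I would state it as the diameter bound $\sup_{{\bf y}\in{\cal A}(q^{(k)})}||{\bf y}-{\bf u}||=||q^{(k)}-{\bf u}||\rightarrow 0$. A secondary point is that the spectral decomposition (\ref{743}) used in Proposition \ref{pro1} assumes ${\cal P}$ is diagonalizable. This holds automatically here because ${\cal P}=\sum_r p(g_r)M(r)$ with $G$ Abelian is normal, being a combination of commuting unitary matrices.
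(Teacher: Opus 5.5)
Your proposal is correct and follows the same route as the paper. Proposition \ref{pro1} gives $q^{(k)}=q^{(0)}{\cal P}^k\to{\bf u}$; the entropy and Gini limits follow by evaluating these quantities at ${\bf u}$; and the monotone chain (\ref{45A}) comes from Eq.(\ref{46A}). Your additions are welcome tightenings rather than a different approach: the continuity remarks, the explicit bound $\sup_{{\bf y}\in{\cal A}(q^{(k)})}||{\bf y}-{\bf u}||=||q^{(k)}-{\bf u}||\to 0$ for the collapse of the polytope (which the paper asserts without proof), and the normality observation that justifies the spectral decomposition.
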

\begin{proof}
Using Eq.(\ref{AA1}) we find that 
\begin{eqnarray}
\lim _{k\rightarrow\infty}q^{(k)}= q^{(0)}\lim _{k\rightarrow\infty}{\cal P}^k=q^{(0)}\left (\frac{1}{\ell}J_\ell \right )={\bf u}.
\end{eqnarray}
Therefore
\begin{eqnarray}
\lim _{k\rightarrow\infty}E[q^{(k)})]=E({\bf u})=\log \ell;\;\;\;\lim _{k\rightarrow\infty}{\cal G}[q^{(k)})]={\cal G}({\bf u})=0.
\end{eqnarray}
We prove Eq.(\ref{45A}) using Eq.(\ref{46A}).

\end{proof}

\begin{definition}
When proposition \ref{pro6} holds, we define the mixing time as
 \begin{eqnarray}
{\mathfrak N}(\epsilon)=\min _n\{||q^{(n)}-{\bf u}||\le \epsilon\};\;\;\;\epsilon>0.
\end{eqnarray}
\end{definition}
There is much work on upper bounds for the mixing times in various cases\cite{G1,G2,G3,G4,G5}, which is not discussed here.

\subsection{Random walks in the additive group ${\mathbb Z}(d)$}\label{sec20}
In this section we perform random walks in the additive group ${\mathbb Z}(d)$ (integers modulo $d$).
Let $p(a)$ be the probability for the element $a$ with
\begin{eqnarray}
\sum p(a)=1;\;\;\;a=0,...,d-1.
\end{eqnarray}
For simplicity we consider ${\mathbb Z}(5)$ and we find that 
${\cal P}$ is the doubly stochastic matrix
\begin{eqnarray}\label{89B}
{\cal P}=\begin{pmatrix}
p(0)&p(1)&p(2)&p(3)&p(4)\\
p(4)&p(0)&p(1)&p(2)&p(3)\\
p(3)&p(4)&p(0)&p(1)&p(2)\\
p(2)&p(3)&p(4)&p(0)&p(1)\\
p(1)&p(2)&p(3)&p(4)&p(0)\\
\end{pmatrix}=p(0){\bf 1}+p(4)X+p(3)X^2+p(2)X^3+p(1)X^4.
\end{eqnarray}
Here $X$ is the permutation matrix in Eq.(\ref{30A}) for $d=5$.
The ${\bf 1}, X, X^2, X^3,X^4$ are permutation matrices, and form a representation of the  group ${\mathbb Z}(5)$, which  is a subgroup of the symmetric group ${\cal S}$.
Therefore ${\cal P}$ belongs to the Birkhoff subpolytope ${\cal B}[{\mathbb Z}(5)]$.

\begin{example}
We assume that in the matrix ${\cal P}$ in Eq.(\ref{89B}) we have
\begin{eqnarray}
p(0)=p(1)=0.5;\;\;\;p(2)=p(3)=p(4)=0.
\end{eqnarray}
Then ${\cal P}$ has the eigenvalue $e_0=1$, and the maximum absolute value of the other four eigenvalues is $e_{\rm max}=0.80$.

We start the random walk at $t=0$ from the element $0\in {\mathbb Z}(5)$, and using Eq.(\ref{86}) for the entropy and Eq.(\ref{85}) for the Gini index we find
\begin{eqnarray}\label{600B}
q^{(0)}=\begin{pmatrix}
1&0&0&0&0
\end{pmatrix};\;\;\;
E(q^{(0)})=0;\;\;\;{\cal G}(q^{(0)})=0.5
\end{eqnarray}
The polytope ${\cal A}(q^{(0)})\subset [0,1]^4$ has $5$ vertices 
\begin{eqnarray}\label{601B}
(0,0,0,0);\;\;\;(1,0,0,0);\cdots;(0,0,0,1).
\end{eqnarray}

At $t=8$ using Eq.(\ref{86}) for the entropy and Eq.(\ref{85}) for the Gini index, we get
\begin{eqnarray}
&&q^{(8)}=q^{(0)}{\cal P}^8=\begin{pmatrix}
0.222&0.140&0.140&0.222&0.276
\end{pmatrix}\nonumber\\
&&E(q^{(8)})=1.574({\rm nats});\;\;\;{\cal G}(q^{(8)})=0.118
\end{eqnarray}
So at $t=8$ we are in the element $0$ with probability $0.222$, in the element $1$ with probability $0.140$ etc.

Since $e_{\rm max}<1$ it follows that ${\cal P}$ is ergodic doubly stochastic matrix, and as $n\rightarrow\infty$ we get $q^{(n)}\rightarrow {\bf u}$.
We calculated the total variation distances
\begin{eqnarray}
||q^{(0)}-{\bf u}||=0.80;\;\;\;||q^{(8)}-{\bf u}||=0.12.
\end{eqnarray}

\end{example}

\subsection{Random walks in $HW(d)/{\mathbb Z}(d)\simeq  {\mathbb Z}(d)\times {\mathbb Z}(d)$}\label{sec17}
In this section we perform random walks in $HW(d)/{\mathbb Z}(d)\simeq  {\mathbb Z}(d)\times {\mathbb Z}(d)$.
In this case $\ell=d^2$, and we use the following `single index notation' introduced earlier:
\begin{eqnarray}
&&\nu=d\alpha+\beta;\;\;\;g_\nu=D(\alpha, \beta);\;\;\;p(\nu)=p(g_\nu)=p[D(\alpha, \beta)]\nonumber\\
&&\sum _\nu p(\nu)=1; \;\;\;\nu=0,\cdots,d^2-1.
\end{eqnarray}

For simplicity we consider random walk in  $HW(3)/{\mathbb Z}(3)$. Taking into account section \ref{sec15} about the multiplication $D(\nu)D(\mu)$ (and in particular table \ref{t1}), we find that
${\cal P}$ is the doubly stochastic matrix
\begin{eqnarray}\label{89C}
{\cal P}=\begin{pmatrix}
p(0)&p(1)&p(2)&p(3)&p(4)&p(5)&p(6)&p(7)&p(8)\\
p(2)&p(0)&p(1)&p(5)&p(3)&p(4)&p(8)&p(6)&p(7)\\
p(1)&p(2)&p(0)&p(4)&p(5)&p(3)&p(7)&p(8)&p(6)\\
p(6)&p(7)&p(8)&p(0)&p(1)&p(2)&p(3)&p(4)&p(5)\\
p(8)&p(6)&p(7)&p(2)&p(0)&p(1)&p(5)&p(3)&p(4)\\
p(7)&p(8)&p(6)&p(1)&p(2)&p(0)&p(4)&p(5)&p(3)\\
p(3)&p(4)&p(5)&p(6)&p(7)&p(8)&p(0)&p(1)&p(2)\\
p(5)&p(3)&p(4)&p(8)&p(6)&p(7)&p(2)&p(0)&p(1)\\
p(4)&p(5)&p(3)&p(7)&p(8)&p(6)&p(1)&p(2)&p(0)\\
\end{pmatrix}.
\end{eqnarray}
Using the matrix $X$ in Eq.(\ref{30A}) with $n=3$ we rewrite this in a block notation as
\begin{eqnarray}
{\cal P}=\begin{pmatrix}
p(0){\bf 1}+p(1)X^2+p(2)X&p(3){\bf 1}+p(4)X^2+p(5)X&p(6){\bf 1}+p(7)X^2+p(8)X\\
p(6){\bf 1}+p(7)X^2+p(8)X&p(0){\bf 1}+p(1)X^2+p(2)X&p(3){\bf 1}+p(4)X^2+p(5)X\\
p(3){\bf 1}+p(4)X^2+p(5)X&p(6){\bf 1}+p(7)X^2+p(8)X&p(0){\bf 1}+p(1)X^2+p(2)X\\
\end{pmatrix}.
\end{eqnarray}
From this follows that
\begin{eqnarray}
{\cal P}&=&p(0){\bf 1}\otimes {\bf 1}+p(1){\bf 1}\otimes X^2+p(2){\bf 1}\otimes X+p(3)X^2\otimes {\bf1}+p(4)X^2\otimes X^2\nonumber\\&+&
p(5)X^2\otimes X+p(6)X\otimes {\bf 1}+p(7)X\otimes X^2+p(8)X\otimes X.
\end{eqnarray}
Using the notation in Eq.(\ref{100}) we rewrite this as
\begin{eqnarray}\label{89E}
{\cal P}&=&p(0)M_0+p(1)M_2+p(2)M_1+p(3)M_6+p(4)M_8\nonumber\\&+&
p(5)M_7+p(6)M_3+p(7)M_5+p(8)M_4.
\end{eqnarray}
The $M_0,...,M_8$ are permutation matrices, and form a representation of the  $HW(3)/{\mathbb Z}(3)\simeq  {\mathbb Z}(3)\times {\mathbb Z}(3)$ group, which  is a subgroup of the symmetric group ${\cal S}$.
Therefore ${\cal P}$ belongs to the Birkhoff subpolytope ${\cal B}[HW(3)/{\mathbb Z}(3)]$.

\begin{example}
As an example, we assume that in the matrix ${\cal P}$ in Eq.(\ref{89E}) we have
\begin{eqnarray}
p(0)=p(1)=0.3;\;\;\;p(2)=p(3)=p(4)=p(5)=p(6)=0;\;\;\;p(7)=p(8)=0.2.
\end{eqnarray}
Then ${\cal P}$ has the eigenvalue $e_0=1$, and the maximum absolute value of the other eight eigenvalues is $e_{\rm max}=0.53$.

We start the random walk at $t=0$ from the element ${\bf 1}\in HW(3)/{\mathbb Z}(3)$ in which case using Eq.(\ref{86}) for the entropy and Eq.(\ref{85}) for the Gini index, we get
\begin{eqnarray}\label{600}
q^{(0)}=\begin{pmatrix}
1&0&0&0&0&0&0&0&0
\end{pmatrix};\;\;\;
E(q^{(0)})=0;\;\;\;{\cal G}(q^{(0)})=0.80
\end{eqnarray}
The polytope ${\cal A}(q^{(0)})\subset [0,1]^8$ has $9$ vertices 
\begin{eqnarray}\label{601}
(0,0,0,0,0,0,0,0);\;\;\;(1,0,0,0,0,0,0,0);\cdots;(0,0,0,0,0,0,0,1).
\end{eqnarray}

At $t=4$ using Eq.(\ref{86}) for the entropy and Eq.(\ref{85}) for the Gini index, we get
\begin{eqnarray}
&&q^{(4)}=q^{(0)}{\cal P}^4=\begin{pmatrix}
0.088&0.088&0.106&0.108&0.129&0.108&0.139&0.116&0.116
\end{pmatrix}\nonumber\\
&&E(q^{(4)})=2.184({\rm nats});\;\;\;{\cal G}(q^{(4)})=0.073.
\end{eqnarray}
So at $t=4$ we are in the element $D(0)=D(0,0)={\bf 1}$ with probability $0.088$, in the element $D(1)=D(0,1)$ with probability $0.088$, in the element $D(2)=D(0,2)$ with probability $0.106$, etc.

Since $e_{\rm max}<1$ it follows that ${\cal P}$ is ergodic doubly stochastic matrix, and as $n\rightarrow\infty$ we get $q^{(n)}\rightarrow {\bf u}$.
We calculated the total variation distances
\begin{eqnarray}
||q^{(0)}-{\bf u}||=0.888;\;\;\;||q^{(4)}-{\bf u}||=0.056.
\end{eqnarray}
\end{example}

\begin{example}
As another example, we assume that in the matrix ${\cal P}$ in Eq.(\ref{89E}), we have the binomial distribution of probabilities
\begin{eqnarray}
p(\nu)=\begin{pmatrix}
8\\
\nu
\end{pmatrix}
f^\nu(1-f)^{8-\nu};\;\;\;f=0.3;\;\;\;\nu=0,...,8.
\end{eqnarray}
Then ${\cal P}$ has the eigenvalue $e_0=1$, and  the maximum absolute value of the other eight eigenvalues is $e_{\rm max}=0.493$.

We start the random walk at $t=0$ from the element ${\bf 1}$ in which case $q^{(0)}$, $E(q^{(0)})$, ${\cal G}(q^{(0)})$,
and the polytope ${\cal A}(q^{(0)})$ are the same as in the previous example (given in Eqs(\ref{600}),(\ref{601})).

At $t=3$ using Eq.(\ref{86}) for the entropy and Eq.(\ref{85}) for the Gini index, we get
\begin{eqnarray}
&&q^{(3)}=q^{(0)}{\cal P}^3=\begin{pmatrix}
0.087&   0.087&    0.092&    0.139&    0.146&    0.119&    0.105&    0.099&    0.121
\end{pmatrix}\nonumber\\
&&E(q^{(3)})=2.174({\rm nats});\;\;\;{\cal G}(q^{(3)})=0.099
\end{eqnarray}

${\cal P}$ is ergodic doubly stochastic matrix (because $e_{\rm max}<1$), and as $n\rightarrow\infty$ we get $q^{(n)}\rightarrow {\bf u}$.
We calculated the total variation distances
\begin{eqnarray}
||q^{(0)}-{\bf u}||=0.888;\;\;\;||q^{(3)}-{\bf u}||=0.135.
\end{eqnarray}

\end{example}

\section{Implementation of the random walks in ${\mathbb Z}(d)$ with a sequence of non-selective projective measurements}\label{sec44}

In this section we present a physical implementation of the random walks in ${\mathbb Z}(d)$, with a sequence of non-selective projective measurements.
We assume that at $t=0$ the system is described with a $d\times d$ density matrix $\rho(0)$.
We perform the non-selective measurement in Eq.(\ref{ns1}) and we get the following post-measurement density matrix at $t=0$:
\begin{eqnarray}
\sigma(0)=\sum _j q_j^{(0)}\varpi(j);\;\;\;q_j^{(0)}={\rm Tr}[\rho(0) \varpi(j)];\;\;\;j=0,...,d-1.
\end{eqnarray}
$q_j^{(0)}$ is a probability vector.
Between $t=0$ and  $t=1$ the density matrix $\sigma(0)$ evolves with the unitary matrix $U$:
\begin{eqnarray}\label{123}
\rho(1)=\sum _j q_j^{(0)}U\varpi(j)U^\dagger;\;\;\;j=0,...,d-1.
\end{eqnarray}
We then perform again the non-selective measurement in Eq.(\ref{ns1}) and we get the following post-measurement density matrix at $t=1$:
\begin{eqnarray}
&&\sigma (1)=\sum _{j}q_j^{(1)}\varpi(j);\;\;\;q_j^{(1)}={\rm Tr}[\rho (1)\varpi(j)]=\sum _k q_k^{(0)}V_{kj}\nonumber\\
&&V_{kj}={\rm Tr}\left [U\varpi (k)U^\dagger\varpi(j)\right ]=|\bra{X;j}U\ket{X;k}|^2.
\end{eqnarray}
It is easily seen that $V$ is a $d\times d$ doubly stochastic matrix. 
But in order to implement a random walk in ${\mathbb Z}(d)$, it is {\bf essential} that the matrix $V$ belongs to the Birkhoff subpolytope ${\cal B}[{\mathbb Z}(d)]$
(proposition \ref{pro3}).
A simple example is to take $U=X$ (or a power of $X$).
In this case $V=X$.
 We note here, that if all the matrix elements in $U$ are $0$ or $1$ (as in the case of permutation matrices), then $V=U$.

The density matrix $\sigma(1)$ evolves between $t=1$ and $t=2$ into the density matrix 
\begin{eqnarray}
\rho(2)=U\sigma(1)U^\dagger.
\end{eqnarray}
The unitary matrix $U$ describing the time evolution between $t=1$ and $t=2$, is taken to be the same as the unitary matrix describing the time evolution between $t=0$ and $t=1$.
We then perform again the non-selective  measurement in Eq.(\ref{ns1}) and we get the post-measurement density matrix at $t=2$:
\begin{eqnarray}
&&\sigma (2)=\sum _{j}q_j^{(2)}\varpi(j);\;\;\;q_j^{(2)}={\rm Tr}[\rho (2)\varpi(j)]=\sum _k q_k^{(1)}V_{kj}=\sum _{r} q_r^{(0)}(V^2)_{rj} 
\end{eqnarray}
 We continue in a similar way, and get the sequence
\begin{eqnarray}\label{700}
\rho(0)\xrightarrow[\rm projective \; meas]{\rm non-selective}\sigma (0)\xrightarrow{U}\rho(1)\xrightarrow[\rm projective \; meas]{\rm non-selective}\sigma (1)\xrightarrow{U}\rho(2)\xrightarrow
[\rm projective \; meas]{\rm non-selective}\sigma (2)\xrightarrow{U}\cdots
\end{eqnarray}
It is a sequence of non-selective measurements (as in Eq.(\ref{ns1})) at discrete times, with unitary transformations that describe time evolution between them.
The relationship between the various probability vectors after the measurements is
\begin{eqnarray}\label{801}
q^{(n)}=q^{(k)}V^{n-k};\;\;\;n>k;\;\;\;V\in {\cal B}[{\mathbb Z}(d)];\;\;\;q^{(n)}\in {\cal A}[q^{(k)};{\cal B}({\mathbb Z}(d))].
\end{eqnarray}
$V$ is a doubly stochastic matrix which belongs to the Birkhoff subpolytope ${\cal B}[{\mathbb Z}(d)]$.
$q^{(n)}$ is a probability vector that belongs to the polytope ${\cal A}[q^{(k)};{\cal B}({\mathbb Z}(d))]$.

At time $t=n$ the post-measurement density matrix is 
\begin{eqnarray}
\sigma(n)=\sum _j q_j^{(n)}\varpi(j);\;\;\;q^{(n)}=q^{(0)}V^n.
\end{eqnarray}

Above we assumed that $V=U=X$ or any of the vertices $X^r$ in the Birkhoff subpolytope ${\cal B}[{\mathbb Z}(d)]$.
General matrices within this polytope are not unitary, and their use requires a more general process where the unitary evolution between measurements 
is replaced with a random unitary channel\cite{R1,R2,R3} (open system), which are completely positive and trace preserving maps. 
In this case we have a set of unitary operators $U^{(a)}$ with probabilities ${\mathfrak p}_a$, and time-evolution of a density matrix $\sigma$ is described as 
\begin{eqnarray}\label{WW}
\sigma\rightarrow \sum _a{\mathfrak p}_aU^{(a)}\sigma [U^{(a)}]^\dagger;\;\;\;\sum _a{\mathfrak p}_a=1.
\end{eqnarray}
We take $U^{(a)}=X^a$ and Eq.(\ref{123}) is generalised into 
\begin{eqnarray}
\rho(1)=\sum _{k=0} ^{d-1}\left (q_k^{(0)}\sum_{a=0}^{d-1} {\mathfrak p}_a X^a\varpi(k)X^{-a}\right ).
\end{eqnarray}
The post-measurement density matrix at $t=1$, becomes 
\begin{eqnarray}\label{ABC}
&&\sigma (1)=\sum _{j}q_j^{(1)}\varpi(j);\;\;\;q_j^{(1)}={\rm Tr}[\rho (1)\varpi(j)]=\sum _k q_k^{(0)}V_{kj}\nonumber\\
&&V_{kj}=\sum_{a=0}^{d-1} {\mathfrak p}_a{\rm Tr}\left [X^a\varpi (k)X^{-a}\varpi(j)\right ]=\sum_{a=0}^{d-1} {\mathfrak p}_a|\bra{X;j}X^a\ket{X;k}|^2=\sum_{a=0}^{d-1}{\mathfrak p}_a(X^a)_{kj}.
\end{eqnarray}
The matrix $V$ is now in the interior of the Birkhoff subpolytope ${\cal B}[{\mathbb Z}(d)]$.
When we replace the unitary evolution with a random unitary channel (for an open system), everything that we discussed in this section holds with $V$  the matrix in Eq.(\ref{ABC}).

\section{Implementation of the random walks in $HW(d)/{\mathbb Z}(d)$ with a sequence of non-selective POVM measurements}\label{sec45}

In this section we follow the same process as in the previous section, but here we perform the non-selective POVM measurements in Eq.(\ref{ns2}).
We intentionally use a  similar notation as in the previous section, so that the analogy is clear.
Of course here the probability vectors and the transition matrices are $d^2$-dimensional, in contrast to the previous section where they were $d$-dimensional.

At $t=0$ the system is described with the density matrix $\rho(0)$ and we perform the  non-selective POVM measurement in Eq.(\ref{ns2}). We get the post-measurement density matrix
\begin{eqnarray}
\sigma(0)=\sum _\nu q_\nu^{(0)}\Pi(\nu);\;\;\;q_\nu^{(0)}=\frac{1}{d}{\rm Tr}[\rho(0) \Pi(\nu)];\;\;\;\nu=0,...,d^2-1.
\end{eqnarray}
Between $t=0$ and $t=1$ the system evolves as a random unitary channel (open system) and is described with Eq.(\ref{WW}).
The density matrix $\sigma(0)$ becomes
\begin{eqnarray}\label{123A}
\rho(1)=\sum _{\nu=0}^{d^2-1}\left (q_\nu^{(0)} \sum_a{\mathfrak p}_aU^{(a)}\Pi(\nu)[U^{(a)}]^\dagger\right ).
\end{eqnarray}
We then perform again the non-selective POVM measurement in Eq.(\ref{ns2}) and we get the following post-measurement density matrix at $t=1$:
\begin{eqnarray}
&&\sigma (1)=\sum _{\mu}q_\mu^{(1)}\Pi(\mu);\;\;\;q_\mu^{(1)}=\frac{1}{d}{\rm Tr}[\rho (1)\Pi(\mu)]=\sum _\nu q_\nu^{(0)}W_{\nu \mu} \nonumber\\
&&W_{\nu\mu}=\frac{1}{d}\sum _a{\mathfrak p}_a{\rm Tr}\left [U^{(a)}\Pi(\nu)[U^{(a)}]^\dagger\Pi(\mu)\right ]=\frac{1}{d}\sum _a{\mathfrak p}_a|\bra{C;\mu}U^{(a)}\ket{C;\nu}|^2.
\end{eqnarray}
$W_{\nu\mu}$ is a $d^2\times d^2$ doubly stochastic matrix (the proof is based on the resolution of the identity of coherent states).
But in order to implement a random walk in $HW(d)/{\mathbb Z}(d)$, it is {\bf essential} that the matrix $W$ belongs to the Birkhoff subpolytope ${\cal B}[HW(d)/{\mathbb Z}(d)]$.
We take $U^{(a)}=D(a)$ (the displacement operators in the one index notation). 
Then 
\begin{eqnarray}
&&W_{\nu\mu}=\frac{1}{d}\sum _{a=0}^{d^2-1}{\mathfrak p}_a|\bra{C;\mu}D(a)\ket{C;\nu}|^2=\frac{1}{d}\sum _{a=0}^{d^2-1}{\mathfrak p}_a|\bra{C;\mu}{C;\alpha \boxplus \nu}\rangle |^2.
\end{eqnarray}
If we take ${\mathfrak p}_a=\frac{1}{n^2}$, then for all values $of a$ the $\alpha \boxplus \nu$ takes all values, and taking into account the resolution of the identity of coherent states we get
\begin{eqnarray}
&&W_{\nu\mu}=\frac{1}{d^3}\sum _{a=0}^{d^2-1}|\bra{C;\mu}{C;\alpha \boxplus \nu}\rangle |^2=\frac{1}{d^2}.
\end{eqnarray}
In this case $W=\frac{1}{n^2}J_{n^2}$ which belongs in the Birkhoff subpolytope ${\cal B}[HW(d)/{\mathbb Z}(d)]$, as discussed in example \ref{ex2}.
For small enough perturbations ${\mathfrak p}_a=\frac{1}{n^2}+\epsilon_a$ (with $\sum \epsilon_a=0$) we will get a matrix $W$ close to $W=\frac{1}{n^2}J_{n^2}$ which will be within
the Birkhoff subpolytope ${\cal B}[HW(d)/{\mathbb Z}(d)]$.

The density matrix $\sigma(1)$ evolves between $t=1$ and $t=2$ as in Eq.(\ref{123A}) into 
\begin{eqnarray}
\rho(2)=\sum _{\nu=0}^{d^2-1}\left (q_\nu^{(1)} \sum_a{\mathfrak p}_aU^{(a)}\Pi(\nu)[U^{(a)}]^\dagger\right ).
\end{eqnarray}
We then perform again the non-selective POVM measurement in Eq.(\ref{ns2}) and we get the post-measurement density matrix at $t=2$:
\begin{eqnarray}
&&\sigma (2)=\sum _{\mu}q_\mu^{(2)}\Pi(\mu);\;\;\;q_\mu^{(2)}=\frac{1}{d}{\rm Tr}[\rho (2)\Pi(\mu)]=\sum _\mu q_\nu^{(1)}W_{\nu \mu} =
\sum _{\kappa} q_\kappa^{(0)}(W^2)_{\kappa\mu}
\end{eqnarray}
We continue in a similar way:
\begin{eqnarray}\label{701}
\rho(0)\xrightarrow[\rm POVM\; meas]{\rm non-selective}\sigma (0)\xrightarrow[\rm unitary\; channel]{\rm random}\rho(1)\xrightarrow[\rm POVM\; meas]{\rm non-selective}\sigma (1)
\xrightarrow[\rm unitary\; channel]{\rm random}\rho(2)\rightarrow
\end{eqnarray}
The relationship between the various probability vectors after the measurements is
\begin{eqnarray}\label{802}
q^{(n)}=q^{(k)}W^{n-k};\;\;\;n>k;\;\;\;W\in {\cal B}[HW(d)/{\mathbb Z}(d)];\;\;\;q^{(n)}\in {\cal A}\{q^{(k)};{\cal B}[HW(d)/{\mathbb Z}(d)]\}
\end{eqnarray}
$W$ is a doubly stochastic matrix which belongs to the Birkhoff subpolytope ${\cal B}[HW(d)/{\mathbb Z}(d)]$.
$q^{(n)}$ is a probability vector that belongs to the polytope ${\cal A}\{q^{(k)};{\cal B}[HW(d)/{\mathbb Z}(d)]\}$.

At time $t=n$ the post-measurement density matrix is 
\begin{eqnarray}
\sigma(n)=\sum _\nu q_\nu^{(n)}\Pi(\nu);\;\;\;q^{(n)}=q^{(0)}W^n.
\end{eqnarray}

\section{Discussion}

Random walks are models for the physical process of diffusion and have been studied extensively in various contexts.
In this paper we study random walks in a finite Abelian group $G$. Their study involves Markov chains with doubly stochastic transition matrices in the
Birkhoff subpolytope ${\cal B}(G)$ (proposition \ref{pro3}).
We have shown that for each probability vector $q^{(k)}$ at time $k$, there exist a polytope ${\cal A}[q^{(k)};{\cal B}(G)]$ which contains all future probability vectors, and which does not depend on the transition matrix.
These polytopes shrink in time(proposition \ref{pro5}).
We have also studied ergodic doubly stochastic transition matrices (proposition \ref{pro6}).

The general theory has been applied to random walks in the additive group ${\mathbb Z}(d)$ (section \ref{sec20}), 
and in the Heisenberg-Weyl group $HW(d)/{\mathbb Z}(d)\simeq {\mathbb Z}(d)\times {\mathbb Z}(d)$ (section \ref{sec17}). 
The doubly stochastic matrices for these cases are given explicitly in Eqs.(\ref{89B}), (\ref{89E}).
We also presented numerical examples for these two cases, and characterised the probability vectors with quantities like entropy, the Gini index, and the total variation distance from the most uncertain 
probability vector ${\bf u}$. 

In section \ref{sec44}, we presented a physical implementation of random walks in ${\mathbb Z}(d)$ using the sequence of non-selective quantum measurements in Eq.(\ref{700}) with orthogonal projectors.
The probability vectors in this case are given in Eq.(\ref{801}).

In section \ref{sec45}, we presented a physical implementation of random walks in the  Heisenberg-Weyl group $HW(d)/{\mathbb Z}(d)\simeq {\mathbb Z}(d)\times {\mathbb Z}(d)$, 
using the sequence of non-selective POVM measurements in Eq.(\ref{701}) with coherent states.
The probability vectors in this case are given in Eq.(\ref{802}).

The work is a contribution to random walks on finite Abelian groups, using Markov chains with doubly stochastic matrices. We emphasised the importance of Birkhoff subpolytopes and also polytopes of probability vectors, for this study. We also used a variety of quantities (entropy, the Gini index, the total variation distance, etc) to characterise the probability vectors describing the random walks.

\end{document}